\documentclass[sigconf,twocolumn]{acmart}

%% \BibTeX command to typeset BibTeX logo in the docs
\AtBeginDocument{%
  \providecommand\BibTeX{{%
    \normalfont B\kern-0.5em{\scshape i\kern-0.25em b}\kern-0.8em\TeX}}}

%% Rights management information.  This information is sent to you
%% when you complete the rights form.  These commands have SAMPLE
%% values in them; it is your responsibility as an author to replace
%% the commands and values with those provided to you when you
%% complete the rights form.

\usepackage{csquotes}
\usepackage{amsmath,amsfonts}
\usepackage{array}     
\usepackage{comment}  
\usepackage{xcolor}
\usepackage{hyperref}    
\usepackage{float}
\usepackage{url}
\usepackage{tikz}
\usepackage{soul}
\usepackage{siunitx}
\usepackage[utf8]{inputenc}
\usepackage{caption}

\floatstyle{ruled}
\newfloat{algo}{tp}{lop}
\floatname{algo}{Algorithm}
\setcounter{topnumber}{2}    
\setcounter{bottomnumber}{2} 
\captionsetup[algo]{singlelinecheck=off,labelsep=quad}    
\usepackage[noend]{algpseudocode}
\algrenewcommand\textproc{\sffamily}
\algrenewcommand\algorithmicindent{1em}
\captionsetup[table]{font=small,labelsep=quad,textfont=rm}

\usepackage{flushend}  
\usepackage{mathtools} 
\usepackage{booktabs}  
\usepackage{tikz-cd}
\usepackage{nicefrac}
\usepackage[linesnumbered,lined,boxed]{algorithm2e}

\def\Nbb{\mathbb{N}}

\renewcommand{\leq}{\leqslant}
\renewcommand{\geq}{\geqslant}

\makeatletter
\def\mproof#1{%
    \trivlist
    \item[%
        \hskip 10\p@
        \hskip \labelsep
        {\sc #1.}%
    ]
    \ignorespaces
}
\def\mqed{%
    \unskip
    \kern 10\p@
    \hfill
    \begingroup
        \unitlength\p@
        \linethickness{.4\p@}%
        \framebox(6,6){}%
    \endgroup
    \global\@qededtrue
}
  \newtheorem{theorem}{Theorem}[section]
  \newtheorem{proposition}[theorem]{Proposition}
  \newtheorem{lemma}[theorem]{Lemma}
  \newtheorem{remark}[theorem]{Remark}
  \theoremstyle{definition}

\makeatother
{{\hfill{$\square$}\bgroup\parfillskip=0pt\par\egroup}}

\newenvironment{boldthm}{\bfseries\begin{theorem}}{\end{theorem}\mdseries}

\newcommand{\quotient}[2]{%
\leavevmode%
\kern-.1em
\raise.2ex\hbox{\(#1\)}\kern-.1em%
/%
\kern-.1em\lower.25ex\hbox{\(#2\)}}

\usepackage[textwidth=1.7cm]{todonotes}

%% end of the preamble, start of the body of the document source.
\begin{document}
\fancyhead{}

%% The "title" command has an optional parameter,
%% allowing the author to define a "short title" to be used in page headers.

\title[Computing Characteristic Polynomials of $p$-Curvatures in Average
Polynomial Time]{Computing Characteristic Polynomials of $p$-Curvatures \\
in Average Polynomial Time}

%% The "author" command and its associated commands are used to define
%% the authors and their affiliations.
%% Of note is the shared affiliation of the first two authors, and the
%% "authornote" and "authornotemark" commands
%% used to denote shared contribution to the research.

\author{Rapha{\"e}l Pag{\`e}s}
\affiliation{%
  \institution{IMB, Universit{\'e} de Bordeaux, France}
  \country{}
}
\email{raphael.pages@u-bordeaux.fr}
% \email{alin.bostan@inria.fr}

%% By default, the full list of authors will be used in the page
%% headers. Often, this list is too long, and will overlap
%% other information printed in the page headers. This command allows
%% the author to define a more concise list
%% of authors' names for this purpose.
\renewcommand{\shortauthors}{Rapha{\"e}l Pag{\`e}s}

\begin{abstract}
We design a fast algorithm that computes, for a given linear differential operator with coefficients in $\mathbb{Z}[x]$, all 
the characteristic polynomials of its $p$-curvatures, for all primes~$p< N$, 
in asymptotically quasi-linear bit complexity in $N$.
We discuss implementations and applications of our algorithm.
We shall see in particular that the good performances of our algorithm
are quickly visible.
\end{abstract}
                      
%% The code below is generated by the tool at http://dl.acm.org/ccs.cfm.
%% Please copy and paste the code instead of the example below.
%%
\begin{CCSXML}
<ccs2012>
<concept>
<concept_id>10010147.10010148.10010149.10010150</concept_id>
<concept_desc>Computing methodologies~Algebraic algorithms</concept_desc>
<concept_significance>500</concept_significance>
</concept>
</ccs2012>
\end{CCSXML}

\ccsdesc[500]{Computing methodologies~Algebraic algorithms}

%% Keywords. The author(s) should pick words that accurately describe
%% the work being presented. Separate the keywords with commas.
\keywords{Algorithms, complexity, $p$-curvature, matrix factorial.}

%% This command processes the author and affiliation and title
%% information and builds the first part of the formatted document.
\maketitle

%%%%%%%%%%%%%%%%%%%%%%%%%%%%%%%%%%%%%%%%%%%%%%%%%%%%%%%%%%%%%%%%%%%%
\section{Introduction}\label{sec:intro}
%%%%%%%%%%%%%%%%%%%%%%%%%%%%%%%%%%%%%%%%%%%%%%%%%%%%%%%%%%%%%%%%%%%%

The study of differential equations is a large part of mathematics which finds
applications in many fields, particularly in physical sciences. Although the
classical study of differential equations concerns essentially functions of
real or complex variables, those equations can also be studied in an
algebraic way.
The functions in calculus get replaced by the elements of a so-called
\emph{differential ring}, and the ``set of differential equations'' is endowed
with a ring structure. The resulting formalism is more flexible than that of
calculus and makes it possible to study problems in positive
characteristic.

In the algebraic context, the most relevant questions about a
linear differential system $Y'=AY$, with~$A$ a matrix with coefficients in
$\mathbb{Q}(x)$, differ a little from those in calculus. For example we
may ask ourselves if such a system has an algebraic basis of solutions.
This problem is especially difficult, though
decidable, as
was shown by Singer in \cite{MFSinger79} (see also~\cite{BCDW16}).

However, such a system can be reduced modulo~$p$ for any prime~$p$ not
dividing the denominators of the matrix. Thus we can consider reductions
modulo~$p$ of a given linear differential system. This construction turns out
to be useful.
Indeed if a system has an algebraic basis of solutions in
characteristic~$0$, then its reduction modulo~$p$ also has one for almost
all primes~$p$. The well-known Grothendieck-Katz conjecture~\cite{Katz82} 
states that this is in fact an equivalence.

Thus it is very interesting for a
given linear differential system in characteristic~$0$ to be able to
determine if their
reduction modulo~$p$ have a basis of algebraic solutions (or, more
generally, to determine the dimension of its space of algebraic solutions)
for a large amount of
primes~$p$, even if this only
has heuristic applications for the time being. However, effective versions
of the Grothedieck-Katz conjecture would turn this
heuristic into a complete algorithm.\par
The resolution of this problem in positive characteristic is much easier
than in characteristic~$0$ thanks to an invariant of linear differential
systems in characteristic~$p$: the $p$-curvature. This invariant is a
linear map, whose kernel has the same dimension as the space of
algebraic solutions of $Y'=AY$. Moreover, it is ``easily computable'', as
its matrix is the $p$-th matrix $A_p$ of the recursive sequence
\begin{equation}\label{eqn:reccursive_sequence}A_1=-A \quad \text{ and } \quad A_{i+1}=A_{i}'-A\cdot A_i \;
\text{ for } \; i \geq 1.
\end{equation}
In this paper we are interested in computing the
characteristic polynomials of the $p$-curvatures of a linear differential
operator with coefficients in $\mathbb{Z}[x]$ for a whole range of primes
$p< N$.
This information contains
an upper bound on the dimension of the kernel of the $p$-curvatures. It also
enables us to tell whether the
$p$-curvatures are nilpotent. This is interesting since
Chudnovsky's theorem, of which a formulation can be found in
\cite[Section~VIII.1, Theorem~1.5]{DwGeSu94}, states that the minimal
operator making a $G$-function vanish is globally nilpotent. As being
globally nilpotent is quite an uncommon property, this provides a robust heuristic test when trying to post-certify a guessed annihilating differential operator.\par
The naive approach to this problem consists in computing the $p$-curvature
with the recursive sequence~\eqref{eqn:reccursive_sequence} and then computing its characteristic polynomial.
This strategy is sometimes referred to as \emph{Katz's algorithm}~\cite[p.
324]{PuSi03} and outputs the result in~$\tilde{O}(p^2)$ bit operations (in this paper the notation~$\tilde{O}$
will have the same meaning as $O$ except we neglect logarithmic factors). Bostan, Caruso and Schost~\cite{BoCaSc14} brought back the computation of
the characteristic polynomial of the $p$-curvature to that of a factorial
of matrices, and presented an algorithm finishing in $\tilde{O}(\sqrt{p})$
bit operations. It is unknown if the $1/2$ exponent is optimal for this
problem. Indeed, the characteristic polynomial of the $p$-curvature is a
polynomial $P$ of degree $O(1)$ in $x^p$, and it is still unknown whether
$P$ is computable in polynomial time in $\log(p)$.

In this paper, we build upon~\cite{BoCaSc14} to design an algorithm
computing, for a given differential operator, almost
all of the characteristic polynomials of its $p$-curvatures, for all
primes~$p<N$, in quasi-linear, thus quasi-optimal, time in~$N$. This is a
significant improvement over previous algorithms for the given task, since
the iterations of \emph{Katz's algorithm} and of the algorithm
from~\cite{BoCaSc14} only terminate in respectively $\tilde{O}(N^3)$ and
$\tilde{O}(N^{3/2})$ bit operations.\par
Since the number of
primes smaller than~$N$ is also quasi-linear in~$N$, this means that the
average time spent on the computation of one characteristic polynomial is
polynomial in~$\log(N)$. It is important to note that  ``average'' here is
meant as average over the range of primes, and definitely not over the set
of operators (even of fixed degree and order).

To achieve this goal, we reuse an idea of
Costa, Gerbicz and Harvey, who designed
an algorithm computing $(p{-}1)!\bmod p^2$
for all primes~$p$ less than~$N$ in quasi-linear time in~$N$~\cite{CoGeHa14}.
This algorithm was 
originally designed to search for the so-called \emph{Wilson primes}, but 
it soon found many
applications, for instance in counting points on
curves~\cite{Harvey14}.

We begin this article with a quick reminder of the theoretical facts about
differential operators which make our algorithm possible. We then
present our algorithm and evaluate its complexity to see that it is indeed
quasi-linear in~$N$. Lastly we present the results of our implementation of
the algorithm in the computer algebra software \emph{SageMath}.\\
       
{\bf Acknowledgements.} 
%\ab{Mention ANR/ERC funding here.}
This work was supported by
\textcolor{magenta}{\href{https://specfun.inria.fr/chyzak/DeRerumNatura/}{DeRerumNatura}}
ANR-19-CE40-0018 and CLap--CLap ANR-18-CE40-0026-01.
I address special thanks to my PhD thesis advisors,
Alin Bostan and Xavier Caruso who helped me during the preparation of this article, whose roots are in my Master’s thesis~\cite{master}. I also warmly thank the reviewers for their relevant and numerous
comments and the amazing amount of work they put on this paper.

%%%%%%%%%%%%%%%%%%%%%%%%%%%%%%%%%%%%%%%%%%%%%%%%%%%%%%%%%%%%%%%%%%%%

%%%%%%%%%%%%%%%%%%%%%%%%%%%%%%%%%%%%%%%%%%%%%%%%%%%%%%%%%%%%%%%%%%%%
\section{Differential operators}\label{sec:theory}

In this section, we outline the theoretical aspects necessary to our algorithm
by following the exposition of \cite{BoCaSc14} (to which we refer for more
detailed explanations) and extending the results of \emph{loc. cit.} to
characteristic~$0$. All results in Sections~\ref{euler_op}
and~\ref{p_curv_def} come from~\cite{BoCaSc14}. Besides, proofs were added
when they were not given in \emph{loc. cit.}

Let~$\mathcal{R}$ be either $R[x]$ or $R(x)$, with $R=\mathbb{Z}$ or
$\mathbb{F}_p$, equipped with their usual derivation $f\mapsto f'$. 
Throughout this article we will
study the ring of differential operators with coefficients in~$\mathcal{R}$,
which we denote by~$\mathcal{R}\langle\partial\rangle$.
The elements of $\mathcal{R}\langle\partial\rangle$ are polynomials in
$\partial$ of the form
\[f_n\partial^n+f_{n-1}\partial^{n-1}+\cdots+f_1\partial+f_0\] with
$f_i\in\mathcal{R}$. The (noncommutative) multiplication in this
ring is deduced from the Leibniz rule
$\partial f=f\partial+f'$ for all elements~$f$ of~$\mathcal{R}$.

\subsection{Euler and integration operators}\label{euler_op}
In Sections~\ref{euler_op} and~\ref{p_curv_def} we will only consider the case
$R=\mathbb{F}_p$. We study the Euler operator $x\partial$. One can show
that 
\[
  \partial\cdot(x\partial)=(x\partial+1)\cdot\partial
  \quad\text{and}\quad
  x\cdot(x\partial)=(x\partial-1)\cdot x.\]\par
  We introduce a new variable~$\theta$ and consider the noncommutative
  ring $\mathbb{F}_p[\theta]\langle\partial\rangle$ (resp.
  $\mathbb{F}_p(\theta)\langle\partial\rangle$) 
  whose elements are polynomials in the variable~$\partial$ with coefficients
  in $\mathbb{F}_p[\theta]$ (resp. $\mathbb{F}_p(\theta)$), with 
  multiplication deduced from the rule
  $\partial\theta=(\theta+1)\partial$.

  We now want to rewrite operators in the variable~$x$ as operators in the
  variable~$\theta$ with the association $\theta\mapsto x\partial$.
  In order to do this, we introduce the integration operator~$\partial^{-1}$
  and the algebras $\mathbb{F}_p[x]\langle\partial^{\pm 1}\rangle$ (resp.
  $\mathbb{F}_p(x)\langle\partial^{\pm 1}\rangle$) of Laurent
  polynomials in the variable~$\partial$ with coefficients in $\mathbb{F}_p[x]$ (resp.
  $\mathbb{F}_p(x)$).
  The same can be done in the variable~$\theta$.
  \begin{proposition}[{\cite[Section~2]{BoCaSc14}}]
    The rings $\mathbb{F}_p[x]\langle\partial^{\pm
    1}\rangle\subset\mathbb{F}_p(x)\langle\partial^{\pm 1}\rangle$ (resp.
     $\mathbb{F}_p[\theta]\langle\partial^{\pm1}\rangle 
    \subset\mathbb{F}_p(\theta)\langle\partial^{\pm 1}\rangle$) of
    Laurent polynomials in the variable $\partial$ are all well defined.
  Furthermore, the multiplication satisfies $\partial^{-1}f=\sum_{i=0}^{p
    -1}(-1)^if^{(i)}\partial^{-i-1}$ for all~$f\in \mathbb{F}_p(x)$, 
    and $\partial^ig(\theta)=g(\theta+i)\partial^i$ for all $ g\in
    \mathbb{F}_p(\theta)
    $ and $i\in\mathbb{Z}.$
  \end{proposition}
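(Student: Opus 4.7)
The key observation I would establish first is that $\partial^p$ is central in both $\mathbb{F}_p(x)\langle\partial\rangle$ and $\mathbb{F}_p(\theta)\langle\partial\rangle$. A short induction on the Leibniz rule $\partial x = x\partial + 1$ gives $\partial^n x = x\partial^n + n\partial^{n-1}$, which at $n = p$ reduces in characteristic $p$ to $\partial^p x = x\partial^p$; by $\mathbb{F}_p$-linearity, $\partial^p$ commutes with all of $\mathbb{F}_p[x]$. Equivalently, expanding $\partial^p f$ via the Leibniz rule yields $f\partial^p + f^{(p)}$ since the intermediate binomial coefficients vanish modulo $p$, and $f^{(p)}\equiv 0$ because the scalar $n(n-1)\cdots(n-p+1)$ appearing in $(x^n)^{(p)}$ is a product of $p$ consecutive integers. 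The same argument starting from $\partial\theta = (\theta+1)\partial$ yields $\partial^n\theta = (\theta + n)\partial^n$, and hence $\partial^p\theta = \theta\partial^p$.

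Given centrality, I would construct each Laurent polynomial ring as the Ore localization at the central multiplicative set $S = \{\partial^{kp} : k\geq 0\}$. Since $S$ consists of central non-zero-divisors (the ambient ring is a domain), this localization is automatic and unambiguous, and $\partial^{-1} = \partial^{p-1}\cdot\partial^{-p}$ lives in it. The subring $\mathbb{F}_p[x]\langle\partial^{\pm 1}\rangle$ is the one generated by $x$, $\partial$, and $\partial^{-1}$, and the inclusion into the $\mathbb{F}_p(x)$ version reduces to the classical Ore localization $\mathbb{F}_p[x]\langle\partial\rangle \subset \mathbb{F}_p(x)\langle\partial\rangle$. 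The $\theta$-variant is identical.

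The two commutation formulas are then verified by direct computation. For $\partial^{-1}f = \sum_{i=0}^{p-1}(-1)^i f^{(i)}\partial^{-i-1}$, I would multiply the right-hand side on the left by $\partial$, apply $\partial\cdot f^{(i)} = f^{(i)}\partial + f^{(i+1)}$ in each term, and observe that the resulting two sums telescope to $f + (-1)^{p-1}f^{(p)}\partial^{-p} = f$, using $f^{(p)} = 0$. For $\partial^i g(\theta) = g(\theta + i)\partial^i$, the case $i \geq 0$ is a direct induction from $\partial\theta = (\theta+1)\partial$, extended to polynomials by $\mathbb{F}_p$-linearity and to rational $g$ via the Ore argument above. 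The case $i<0$ is obtained by conjugating the positive case: from $\partial^{-i}g(\theta - i) = g(\theta)\partial^{-i}$, multiplying both sides on the left and right by $\partial^i$ gives $g(\theta - i)\partial^i = \partial^i g(\theta)$, which is the desired identity after relabelling $g$. The only real obstacle is the first step, namely securing centrality of $\partial^p$ cleanly across the polynomial, rational, and Laurent contexts; everything afterwards is routine bookkeeping plus one telescoping sum.
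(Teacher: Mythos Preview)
Your proof is correct and follows essentially the same line as the paper's: establish that $\partial^p$ is central (via Leibniz and the vanishing of the intermediate binomial coefficients), localize at this central element, and then check the two commutation rules. The one minor variation is in the first formula: the paper derives $\partial^{-1}f$ directly by writing $\partial^{-1}f=\partial^{p-1}f\,\partial^{-p}$ and expanding $\partial^{p-1}f$ via Leibniz together with $\binom{p-1}{i}\equiv(-1)^i\pmod p$, whereas you verify the stated identity by left-multiplying by $\partial$ and telescoping; both arguments are equally short and valid.
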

  \begin{proof}
    One can show that~$\partial^p$ is central in
    $\mathbb{F}_p(\theta)\langle\partial\rangle$. This is also the case in
    $\mathbb{F}_p(x)\langle\partial\rangle$ since $f^{(p)}=0$ for all
    $f\in\mathbb{F}_p(x)$, and thus~$\partial^p f=\sum_{i=0}^p
  \binom{p}{i}f^{(i)}\partial^{p-i}=f\partial^p
      +f^{(p)}=f\partial^p$.\par
    It follows that we only need to invert
    the central element~$\partial^p$ of both sets of rings, which can be
    done the same way as commutative localization.\par

    The first relation comes from the fact that
    $\partial^{-1}f=\partial^{p-1}f\partial^{-
  p}$ and~$\binom{p-1}{i}\equiv (-1)^i\mod p$ and the
      second one is trivial.
  \end{proof}
  \begin{theorem}[{\cite[Section~2.2]{BoCaSc14}}]
  \label{phi_p}
    The following induces an isomorphism of $\,\mathbb{F}_p$-algebras:
  \[\begin{array}{rcccl}
    &\mathbb{F}_p[x]\langle\partial^{\pm
    1}\rangle&\overset{\sim}{\leftrightarrow}&
    \mathbb{F}_p[\theta]\langle\partial^{\pm 1}\rangle&\\
    \varphi_p:&x&\mapsto& \theta\partial^{-1}&\\
    &x\partial&\mapsfrom&\theta&:\psi_p\\
    &\partial&\leftrightarrow&\partial&
  \end{array}\]
  \end{theorem}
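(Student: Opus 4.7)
The plan is to verify that $\varphi_p$ and $\psi_p$ extend to well-defined $\mathbb{F}_p$-algebra homomorphisms, and then check they are mutually inverse on generators, from which the isomorphism claim follows immediately.

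First I would define $\varphi_p$ on the free algebra generated by $x$ and $\partial$ by sending $x \mapsto \theta\partial^{-1}$ and $\partial \mapsto \partial$. The only nontrivial relation to check in $\mathbb{F}_p[x]\langle\partial\rangle$ is Leibniz: $\partial x - x\partial = 1$. Using the rule $\partial\theta = (\theta+1)\partial$ from the previous proposition, I compute $\varphi_p(\partial)\varphi_p(x) = \partial \cdot \theta\partial^{-1} = (\theta+1)\partial\partial^{-1} = \theta+1$, while $\varphi_p(x)\varphi_p(\partial) = \theta\partial^{-1}\cdot\partial = \theta$, giving the required difference of $1$. Thus $\varphi_p$ is well defined on $\mathbb{F}_p[x]\langle\partial\rangle$, and since $\varphi_p(\partial^p) = \partial^p$ is invertible in the target, the universal property of localisation extends $\varphi_p$ to $\mathbb{F}_p[x]\langle\partial^{\pm1}\rangle$.

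Symmetrically, I would define $\psi_p$ on the free algebra generated by $\theta$ and $\partial$ by $\theta \mapsto x\partial$, $\partial \mapsto \partial$. The relation $\partial\theta = (\theta+1)\partial$ is sent to $\partial\cdot x\partial = (x\partial+1)\partial = (x\partial+1)\partial$, which matches $(\psi_p(\theta)+1)\psi_p(\partial)$ by Leibniz. Again $\psi_p(\partial^p)=\partial^p$ is invertible, so $\psi_p$ extends to $\mathbb{F}_p[\theta]\langle\partial^{\pm1}\rangle$.

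Finally I would check that the two maps are mutually inverse, which it suffices to verify on generators: $\varphi_p\bigl(\psi_p(\theta)\bigr) = \varphi_p(x\partial) = \theta\partial^{-1}\cdot\partial = \theta$, $\psi_p\bigl(\varphi_p(x)\bigr) = \psi_p(\theta\partial^{-1}) = x\partial\cdot\partial^{-1} = x$, and both maps fix $\partial$ by construction. There is no real obstacle beyond being careful with the noncommutative products and invoking the extended commutation rule $\partial^i g(\theta) = g(\theta+i)\partial^i$ from the previous proposition when needed; the only conceptual point worth stressing is that the relation $\partial\partial^{-1}=1$ in each Laurent algebra is exactly what makes the image of $x$ under $\varphi_p$ and the image of $\theta$ under $\psi_p$ satisfy the right commutation with $\partial$.
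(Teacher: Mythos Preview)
Your proof is correct and follows essentially the same approach as the paper: verify that $\varphi_p$ and $\psi_p$ respect the defining relations (hence are well-defined algebra maps), then check on generators that the composites are the identity. The paper's argument is terser, simply stating which relations must be checked and observing that $\psi_p\circ\varphi_p$ and $\varphi_p\circ\psi_p$ fix the generators, whereas you carry out the explicit computations; but the strategy is identical.
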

  \begin{proof}
    It is enough to check that 
    $\varphi_p(\partial)\varphi_p(x)=\varphi_p(x)\varphi_p(\partial)+1$ and
    $\psi_p(\partial)\psi_p(\theta) =(\psi_p(\theta)+1)\psi_p(\partial)$
    to see that $\varphi_p$ and $\psi_p$ are well defined. 
    We check that $\psi_p$ and $\varphi_p$ are invertible by checking
    that~$\psi_p\circ \varphi_p$ (resp. $\varphi_p\circ \psi_p$)  is the only morphism
    mapping~$x$ to~$x$ (resp.~$\theta$ to~$\theta$)
    and $\partial$ to $\partial$.
  \end{proof}
  \begin{remark}[{\cite[Section~2.2]{BoCaSc14}}]
    The element $(x{+}1)\partial$ is invertible in
    $\mathbb{F}_p(x)\langle\partial^{\pm
    1}\rangle$ but $\varphi_p((x+1)\partial)=\theta+\partial$ is not
    invertible in~$\mathbb{F}_p(\theta)\langle\partial^{\pm 1}\rangle$. As
    such, $\varphi_p$ does not extend to an isomorphism
    \vspace{-3pt}
    $$\mathbb{F}_p(x)\langle\partial^{\pm
    1}\rangle \rightarrow \mathbb{F}_p(\theta)\langle\partial^{\pm
    1}\rangle.$$
  \end{remark}
  One can show that $\mathbb{F}_p[\theta^p-\theta]\langle\partial^{\pm
  p}\rangle$ is the center of
  $\mathbb{F}_p[\theta]\langle\partial^{\pm 1}\rangle$ and that 
  $\varphi_p^{-1}(\theta^p-\theta)=x^p\partial^p$. This will be useful
    later on.

\subsection{Operators and $p$-curvature}\label{p_curv_def}

We recall that for $L\in \mathbb{F}_p(x)\langle\partial\rangle$, the 
left multiplication by the operator
  $\partial^p$ defines an $\mathbb{F}_p(x)$-linear endomorphism of
    $\nicefrac{\mathbb{F}_p(x)\langle\partial\rangle}{\mathbb{F}_p(x)
    \langle\partial\rangle L}$
  since $\partial^p$ is a central element. We define the $p$-curvature of $L$ as being this $\mathbb{F}_p(x)$-linear endomorphism
  or, for computational purposes, its matrix in the canonical basis
  $(1,\partial,\partial^2,\ldots)$, which we
  denote by~$A_p(L)$.
\begin{remark}
  It follows from the definition that the $p$-curvature of a differential
  operator~$L$ does not change if~$L$ is multiplied on the left by an 
  element of~$\mathbb{F}_p(x)$. Though Algorithm~\ref{finalalgorithm}
  presented in  Section~\ref{mainalgorithm} will
  work for operators in $\mathbb{Z}[x]\langle\partial\rangle$ for
  convenience,
  this remark allows us to say that it in fact works for all operators in
  $\mathbb{Q}(x)\langle\partial\rangle$.
\end{remark}

  \label{ssec:defcharpoly}
  As we did for operators with coefficients in $\mathbb{F}_p(x)$, we define the
  $p$-curvature of an operator~$L$ with coefficients in
  $\mathbb{F}_p(\theta)$ as
  the~$\mathbb{F}_p(\theta)$-linear endomorphism of
  $\nicefrac{\mathbb{F}_p(\theta)\langle\partial\rangle}{\mathbb{F}_p
  (\theta)\langle\partial\rangle\cdot
  L}$ induced by the left multiplication by~$\partial^p$, and we denote 
  by~$B_p(L)$ its matrix in
  the canonical basis $(1,\partial,\partial^2,\ldots)$. By
  \cite[Lemma~2.3]{BoCaSc14} which is proved by a straightforward
  computation, if $B(L)(\theta)$ is the companion
  matrix of~$L$ then
  \[B_p(L)=B(L)(\theta)\cdot B(L)(\theta+1)\cdots B(L)(\theta+p-1).\]
  As we are interested in computing the characteristic polynomial of
  the
  $p$-curvature we introduce the following (\emph{cf}~\cite[Section~3]{BoCaSc14}):\\
  Let $L_x\in \mathbb{F}_p(x)\langle \partial\rangle$, and $L_\theta\in
  \mathbb{F}_p(\theta)\langle\partial\rangle$. We
  denote their respective leading coefficients by $l_{x}\in\mathbb{F}_p(x)$
  and $l_{\theta}\in\mathbb{F}_p(\theta)$ respectively and define two new operators:
  \begin{align*}
    \Xi_{x,\partial}(L_x)&:=l_{x}^p\chi(A_p(L_x))(\partial^p)\\
    \Xi_{\theta,\partial}(L_\theta)&:=\left(\prod_{i=0}^{p-1}l_{\theta}
    (\theta+i)\right)\chi(B_p(L_\theta))(\partial^p)
  \end{align*}
  where $\chi(M)$, for a square matrix~$M$, is its characteristic
  polynomial.
  \begin{remark}
    Depending on the context, we may write~$\Xi_{x,\partial,p}$
    and~$\Xi_{\theta,\partial,p}$ if we want to specify the characteristic.
  \end{remark}
  \begin{proposition}[{\cite[Section~3.1]{BoCaSc14}}]
    The maps $\Xi_{x,\partial}$ and $\Xi_{\theta,\partial}$ are 
    multiplicative and can thus be extended to
    maps on $\mathbb{F}_p(x)\langle\partial^{\pm 1}\rangle$ and
    $\mathbb{F}_p(\theta)\langle\partial^{\pm 1}\rangle$ respectively.
  \end{proposition}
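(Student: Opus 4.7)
The plan is to reduce multiplicativity to two ingredients: multiplicativity of leading coefficients, and multiplicativity of the characteristic polynomial of the $p$-curvature itself, the latter coming from a short exact sequence of cyclic modules.

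First I would verify the easy observation that for $L_1,L_2\in\mathcal{R}\langle\partial\rangle$ (with $\mathcal{R}=\mathbb{F}_p(x)$ or $\mathbb{F}_p(\theta)$) with respective leading coefficients $l_1,l_2$, the product $L_1L_2$ has leading coefficient $l_1l_2$, since the cross terms coming from $\partial f=f\partial+f'$ strictly lower the order.

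Next, the key step: I would construct the short exact sequence of left $\mathcal{R}\langle\partial\rangle$-modules
\[
0\longrightarrow \mathcal{R}\langle\partial\rangle/\mathcal{R}\langle\partial\rangle L_1 \xrightarrow{\;P\mapsto PL_2\;} \mathcal{R}\langle\partial\rangle/\mathcal{R}\langle\partial\rangle L_1L_2 \longrightarrow \mathcal{R}\langle\partial\rangle/\mathcal{R}\langle\partial\rangle L_2 \longrightarrow 0.
\]
Right multiplication by $L_2$ clearly sends $\mathcal{R}\langle\partial\rangle L_1$ into $\mathcal{R}\langle\partial\rangle L_1L_2$, and injectivity on the quotient uses that $\mathcal{R}\langle\partial\rangle$ is an Ore domain (so $PL_2\in\mathcal{R}\langle\partial\rangle L_1L_2$ forces $P\in\mathcal{R}\langle\partial\rangle L_1$). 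Exactness on the right and the isomorphism of the cokernel with $\mathcal{R}\langle\partial\rangle/\mathcal{R}\langle\partial\rangle L_2$ then follow from the third isomorphism theorem. The maps are $\mathcal{R}$-linear with respect to left multiplication because that action commutes with right multiplication by $L_2$. Because $\partial^p$ is central (recalled in the previous subsection), it acts on each term and the sequence is compatible with this action. Choosing an $\mathcal{R}$-basis adapted to the filtration, the matrix of $\partial^p$ on the middle module is block upper triangular with diagonal blocks $A_p(L_1)$ and $A_p(L_2)$, whence
\[
\chi\!\bigl(A_p(L_1L_2)\bigr)=\chi\!\bigl(A_p(L_1)\bigr)\cdot\chi\!\bigl(A_p(L_2)\bigr),
\]
and similarly for $B_p$ in the $\theta$-setting.

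I would then combine the two facts. For $\Xi_{x,\partial}$, I would use that $l_2^p\in\mathbb{F}_p(x)^p\subset \mathbb{F}_p(x^p)$, which lies in the center of $\mathbb{F}_p(x)\langle\partial\rangle$ since $x^p$ is central. Therefore $l_2^p$ commutes with $\chi(A_p(L_1))(\partial^p)$, and writing out
\[
\Xi_{x,\partial}(L_1)\,\Xi_{x,\partial}(L_2)=l_1^p\chi(A_p(L_1))(\partial^p)\cdot l_2^p\chi(A_p(L_2))(\partial^p)=(l_1l_2)^p\chi(A_p(L_1L_2))(\partial^p)
\]
gives $\Xi_{x,\partial}(L_1L_2)$. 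For $\Xi_{\theta,\partial}$ the analogous centrality needs checking: using $\partial g(\theta)=g(\theta+1)\partial$ and $g(\theta+p)=g(\theta)$ in $\mathbb{F}_p(\theta)$, the product $\prod_{i=0}^{p-1}l_2(\theta+i)$ is invariant under $\theta\mapsto\theta+1$ and hence commutes with $\partial$; it also commutes with $\theta$ since $\mathbb{F}_p(\theta)$ is commutative, so it is central. The same computation as above then yields multiplicativity.

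The main obstacle I would expect is making the short exact sequence precisely correct (identifying the image of right multiplication and the resulting cokernel) and then justifying the centrality statements that let the leading-coefficient factor commute past the operator $\chi(A_p(L_1))(\partial^p)$; without that centrality, multiplicativity of $\Xi$ would fail even though multiplicativity of $\chi\circ A_p$ holds. Once multiplicativity is established on $\mathcal{R}\langle\partial\rangle$, the extension to $\mathcal{R}\langle\partial^{\pm 1}\rangle$ is formal: any Laurent operator is $\partial^{-k}L'$ with $L'\in\mathcal{R}\langle\partial\rangle$, and the relation $\Xi(\partial)=\partial^p$ (checked from $A_p(\partial)=0$) together with multiplicativity forces $\Xi(\partial^{-k}L')=\partial^{-kp}\Xi(L')$; one verifies that this prescription is independent of the chosen representative, exactly as in the construction of a commutative localization.
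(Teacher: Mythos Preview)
Your proof is correct and follows essentially the same approach as the paper: both establish multiplicativity via the short exact sequence $0\to D/DL_1\to D/DL_1L_2\to D/DL_2\to 0$ and the resulting block-triangular form of the $p$-curvature, then extend to Laurent operators by formally inverting $\Xi(\partial)=\partial^p$. You add detail the paper leaves implicit, namely the multiplicativity of leading coefficients and the centrality of $l_2^p\in\mathbb{F}_p(x^p)$ (resp.\ of $\prod_{i=0}^{p-1}l_2(\theta+i)$) needed to commute the normalizing factor past $\chi(A_p(L_1))(\partial^p)$.
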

  \begin{proof}
    Let
    $D:=\mathbb{F}_p(x)\langle\partial\rangle$ (resp.
    $D:=\mathbb{F}_p(\theta)\langle\partial\rangle$) and $L_1,L_2\in D$. The right 
    multiplication by~$L_2$
    induces a map $\zeta_1:\nicefrac{D}{DL_1}\rightarrow
    \nicefrac{D}{DL_1L_2}$. There is also a canonical map
    $\zeta_2:\nicefrac{D}{DL_1L_2}\rightarrow\nicefrac{D}{DL_2}$. We check that
    \[0\rightarrow
    \nicefrac{D}{DL_1}\xrightarrow{\zeta_1}\nicefrac{D}{DL_1L_2}
    \xrightarrow{\zeta_2}\nicefrac{D}{DL_2}\rightarrow 0\]
    is an exact sequence. Furthermore the left multiplication by
    $\partial^p$ induces an endomorphism of this exact sequence. It follows
    that in a suitable basis, the matrix of the $p$-curvature of~$L_1L_2$ is
    an upper triangular block matrix, with the upper left block being the
    matrix of the $p$-curvature of~$L_1$ and the bottom right block, that
    of~$L_2$. The multiplicativity immediately follows.
    We extend those applications by setting
    $\Xi_{x,\partial}(L\partial^{-n}) = \Xi_{x,\partial}(L)\Xi_{x,
    \partial}(\partial)^{-n}$ (resp. $\Xi_{\theta,\partial}$) for all $n$
    and all operators $L$.
  \end{proof}
  \begin{boldthm}[{\cite[Section~3]{BoCaSc14}}]\label{formofresult}$\;$  
    \begin{itemize}
      \item The map $\Xi_{x,\partial}$ (resp. $\Xi_{\theta,\partial}$)
        takes its values in $\mathbb{F}_p(x^p)[\partial^{\pm  p}]$ (resp.
        $\mathbb{F}_p(\theta^p-\theta)[\partial^{\pm p}]$).
      \item Those two maps send an operator with polynomial
        coefficients to an operator with polynomial coefficients.
      \item
    The following diagram commutes:
    \[\begin{tikzcd}
  \mathbb{F}_p[x]\langle\partial^{\pm 1}\rangle\arrow[d,"\Xi_{x,\partial}"]
  \arrow[r,"\substack{\varphi_p\\\sim}"]&\mathbb{F}_p[\theta]\langle\partial^{\pm 1}\rangle
  \arrow[d,"\Xi_{\theta,\partial}"]\\
  \mathbb{F}_p[x^p][\partial^{\pm  p}]\arrow[r,"\substack{\varphi_p\\\sim}"]&
  \mathbb{F}_p[\theta^p-\theta][\partial^{\pm p}]
  \end{tikzcd}\]
    \end{itemize}
  \end{boldthm}
    This is the main result that makes our algorithm possible. 
 Theorem~\ref{formofresult} is interesting since it brings back the computation of the
 characteristic polynomial of the $p$-curvature to that of the ``factorial of
 matrices'' $B_p(L)$, and can thus be computed using factorial computation
 methods.
 \subsection{Extension to integral coefficients}
Although the $p$-curvature is defined for operators of
$\mathbb{F}_p(x)\langle\partial\rangle$, we can define the $p$-curvature of
an element of
$\mathbb{Z}[x]\langle\partial\rangle$, since the canonical morphism
$\mathbb{Z}\rightarrow\mathbb{F}_p$ induces a ring homomorphism
\[\mathbb{Z}[x]\langle\partial\rangle\rightarrow\mathbb{F}_p[x]\langle\partial\rangle.\]
 Our goal is to compute, for a differential operator with coefficients
 in~$\mathbb{Z}[x]$, the characteristic polynomials of its $p$-curvatures,
 for nearly all primes~$p$ up to a certain integer~$N$, in~$\tilde{O}(N)$ bit
 operations. 

\begin{proposition}
  The rings $\mathbb{Z}[x]\langle\partial^{\pm 1}\rangle$ and
  $\mathbb{Z}[\theta]\langle\partial^{\pm 1}\rangle$ (analogous to those of
  Section~\ref{euler_op}) are well defined and we
  have an
  isomorphism $\varphi:\mathbb{Z}[x]\langle\partial^{\pm
  1}\rangle\xrightarrow{\sim}\mathbb{Z}[\theta]\langle\partial^{\pm
  1}\rangle$ defined in a similar manner to $\varphi_p$ (see Theorem~\ref{phi_p}).
\end{proposition}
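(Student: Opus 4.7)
The plan is to construct the two Laurent rings as Ore localizations of $\mathbb{Z}[x]\langle\partial\rangle$ and $\mathbb{Z}[\theta]\langle\partial\rangle$ at the multiplicative set $S = \{\partial^n : n \geq 0\}$, and then build $\varphi$ by assigning values to the generators $x$ and $\partial$. Unlike in Section~\ref{euler_op}, $\partial^p$ is no longer central in $\mathbb{Z}[x]\langle\partial\rangle$ (for instance $\partial^p x - x\partial^p = p\,\partial^{p-1} \neq 0$), so one cannot reduce to commutative localization and must verify the (two-sided) Ore condition for $S$ by hand.

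The heart of the argument is a telescoping identity: for every $f \in \mathbb{Z}[x]$ of degree $d$, repeated application of $\partial \cdot f^{(i)} = f^{(i)}\partial + f^{(i+1)}$ (combined with $f^{(d+1)}=0$) gives
\[
  \partial \cdot \sum_{i=0}^{d}(-1)^i f^{(i)}\,\partial^{d-i} \;=\; f\,\partial^{d+1}.
\]
Iterating this formula and extending by linearity to an arbitrary $a = \sum_j f_j\partial^j$ produces, for every $n \geq 0$, an integer $m$ and an element $a' \in \mathbb{Z}[x]\langle\partial\rangle$ with $a\partial^m = \partial^n a'$, which is the right Ore condition. Reversibility is free because $\mathbb{Z}[x]\langle\partial\rangle$ embeds in the first Weyl algebra $\mathbb{Q}[x]\langle\partial\rangle$, a domain; swapping left and right roles gives the left Ore condition symmetrically. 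Hence $S$ is a two-sided denominator set, the ring $\mathbb{Z}[x]\langle\partial^{\pm 1}\rangle$ exists, and the expected formula $\partial^{-1} f = \sum_{i=0}^d (-1)^i f^{(i)} \partial^{-i-1}$ holds, in complete analogy with Section~\ref{euler_op} (except that truncation now comes from $f$ being a polynomial, rather than from $f^{(p)}=0$). On the $\theta$-side, the commutation rule $\partial g(\theta) = g(\theta+1)\partial$ immediately yields $g(\theta)\partial^{-1} = \partial^{-1} g(\theta+1)$, so the Ore condition is trivial there.

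To construct $\varphi$, I would set $\varphi(x) = \theta\partial^{-1}$ and $\varphi(\partial) = \partial$ and verify that these images obey the Leibniz relation $\partial\,\varphi(x) = \varphi(x)\,\partial + 1$: using $\partial\theta = (\theta+1)\partial$, both sides equal $\theta+1$. By the universal property of the presentation of $\mathbb{Z}[x]\langle\partial^{\pm 1}\rangle$, $\varphi$ extends uniquely to a ring morphism. The candidate inverse $\psi$ is built analogously from $\theta \mapsto x\partial$ and $\partial \mapsto \partial$; a one-line check $\partial \cdot x\partial = (x\partial+1)\partial$ confirms the relation $\partial\theta=(\theta+1)\partial$, so $\psi$ is well defined. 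The identities $\psi\varphi(x) = x\partial\cdot\partial^{-1} = x$ and $\varphi\psi(\theta) = \theta\partial^{-1}\cdot\partial = \theta$ (with both fixing $\partial$) show that $\varphi$ and $\psi$ are mutually inverse on generators, hence on the whole rings.

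The main obstacle is the Ore-condition verification in the second paragraph, since the central-localization shortcut of Section~\ref{euler_op} is not available over $\mathbb{Z}$. Fortunately the same explicit formula for $\partial^{-1} f$ still terminates (for polynomial reasons, not for characteristic reasons), and this single computation powers the entire construction; once the two Laurent rings are in place, the isomorphism is a mechanical check on generators that mirrors Theorem~\ref{phi_p} almost verbatim.
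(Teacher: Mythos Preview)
Your proposal is correct and follows essentially the same route as the paper: both verify the right Ore condition for $S=\{\partial^n\}$ via the telescoping identity $f\,\partial^{d+1}=\partial\cdot\sum_{i=0}^{d}(-1)^i f^{(i)}\partial^{d-i}$ (the paper writes this with $n_1=d+1$), invoke the domain property for reversibility, handle the $\theta$-side trivially from $\partial^n g(\theta)=g(\theta+n)\partial^n$, and then check $\varphi$ and its inverse on generators exactly as in Theorem~\ref{phi_p}. Your write-up is a bit more explicit (you flag that $\partial^p$ is no longer central and mention the left Ore condition as well), but there is no substantive difference in method.
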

\begin{proof}
  It is enough to check that  the multiplicative part 
  $S=\{\partial^n|n\in\mathbb{N}\}$ is a right denominator set of
  the ring $\mathbb{Z}[\theta]\langle\partial\rangle$ (see 
  \cite[Section~10A]{Lam99}). Since this ring has no nontrivial zero divisor, we
  only have to check that $S$ is right permutable, that is to say that
  \[\forall g\in\mathbb{Z}[\theta]\langle\partial\rangle,\forall
  n\in\mathbb{N},\exists g_1\in\mathbb{Z}[\theta]\langle\partial\rangle,\exists
  n_1\in\mathbb{N},g\partial^{n_1}=\partial^{n}g_1.\]
  This is the case since for all $n\in\mathbb{N}$ and all
  $g\in\mathbb{Z}[\theta]$, $\partial^n g(\theta-n)=g\partial^n$
  and the fact that $\mathbb{Z}[\theta]\langle\partial^{\pm
  1}\rangle$ is well defined follows by additivity.\par
  The same can be done for operators with coefficients in the variable~$x$.
  Let $f\in \mathbb{Z}[x]$ and suppose that $f^{(n_1)}=0$. Then
  \[f\partial^{n_1+1}=\partial\sum_{k=0}^{n_1-1}(-1)^kf^{(k)}
  \partial^{n_1-k}.\]
  Now by induction on~$i$, we show that for all $n_1\in\Nbb$, all
  $i\in\Nbb^*$
  and all $f\in\mathbb{Z}[x]$ such that $f^{(n_1)}=0$, there
  exists~$f_i\in\mathbb{Z}[x]\langle\partial\rangle$ such that
  $f\partial^{n_1+i}=\partial^if_i$. We then conclude by additivity,  
  which yields the fact that $\mathbb{Z}[x]\langle\partial^{\pm
  1}\rangle$ is well defined. We show that~$\varphi$ is an isomorphism the same way we did
for~$\varphi_p$.\end{proof}
By denoting $\pi_{p}:\mathbb{Z}\rightarrow\mathbb{F}_p$ the canonical
reduction modulo $p$, we can easily see that $\pi_{p}\circ
\varphi=\varphi_p\circ \pi_{p}$ (where we extend naturally~$\pi_p$ to
suitable rings of operators).
This enables us, for a
given operator in $\mathbb{Z}[x]\langle\partial\rangle$, to compute the
characteristic polynomials of its $p$-curvatures, by computing the
isomorphism~$\varphi$ before the reduction modulo~$p$. We will now see how
to use this fact.

\section{Main algorithm}
%%%%%%%%%%%%%%%%%%%%%%%%%%%%%%%%%%%%%%%%%%%%%%%%%%%%%%%%%%%%%%%%%%%%
In this section, we present our algorithm and estimate its complexity. We
denote by $2\leq\omega\leq 3$ an exponent of matrix multiplication. From
\cite{AlVW21}, we know that we can take $\omega<2.3728596$.
We will also have to address the cost of computing characteristic
polynomials. Let
us denote $\Omega_1\in\mathbb{R}^*_+$ such that the computation of the
characteristic polynomial of a square matrix of size~$m$ with coefficients in
a ring $R$ can be done in $\tilde{O}(m^{\Omega_1})$ arithmetic operations in
$R$.
From \cite[Section~6]{KaVi04}, we know that it is theoretically possible to
take $\Omega_1\simeq 2.697263$.
 Finally, throughout this section, we assume that any
two polynomials of degree~$d$ over a ring~$R$
(resp. integers of bit size $n$) can be multiplied in $\tilde{O}(d)$
operations in $R$ (resp. $\tilde{O}(n)$ bit operations); FFT-like algorithms
allow for these complexities~\cite{CaKa91,HaHo21}.
We now give an outline of our algorithm.
\smallskip
\begin{algo}
  \begin{flushleft}
    \emph{Input:} $L_x\in \mathbb{Z}[x]\langle\partial\rangle$, $N\in\mathbb{N}$\\
    \emph{Output:} A list of the characteristic polynomials of the
    $p$-curvatures of~$L_x$, for all primes $p$ with $p< N$ except a
    finite number not depending on~$N$.
  \end{flushleft}
  \BlankLine
    \begin{enumerate}
      \item Name $l_x$ the leading coefficient of $L_x$.
      \item Compute
        $L_\theta:=\varphi(L_x)\in\mathbb{Z}[\theta]\langle\partial^{\pm 1}\rangle$.
\item Name $l_{\theta}$ the leading coefficient of $L_\theta$.
\item Compute $\mathcal{P}_{l_\theta}$, the list of all primes $p< N$
  which do not divide $l_\theta$.
\item  Construct $B(L_\theta)$.
\item  Compute
  $\left(\prod_{i=0}^{p-1}l_\theta(\theta+i)\right) \bmod p$ for all $p\in
        \mathcal{P}_{l_\theta}$.
\item  Compute $B(L)(\theta)\cdots B(L)(\theta+p-1) \bmod p$ for all $p\in
  \mathcal{P}_{l_\theta}$.
\item  Deduce all the $\Xi_{\theta,\partial,p}(L_\theta)$, for $p \in
  \mathcal{P}_{l_\theta}$.
\item  Deduce all
  $\chi(A_p(L_x))=l_x^{-p}\varphi_p^{-1}(\Xi_{\theta,\partial,p}(L_\theta))$,
        for $p \in \mathcal{P}_{l_\theta}$.
    \end{enumerate}
\end{algo}
\begin{remark}
  We only do the computation for the primes which do not divide the leading
  coefficient of $L_\theta$ because for those which do, the companion matrix
  of
  its reduction modulo~$p$ is not the reduction modulo~$p$ of its companion
  matrix.
\end{remark}

\begin{lemma}\label{degreeofresult}
  Let $L_\theta\in\mathbb{F}_p[\theta]\langle\partial\rangle$ be an operator
  with coefficients of degree at most $d\in\Nbb$. Then
  $\Xi_{\theta,\partial}(L_\theta)$ has coefficients of degree at most~$dp$.
\end{lemma}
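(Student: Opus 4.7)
The plan is to reduce the bound to an analysis of the elementary symmetric functions of a ``cleared'' version of $B_p(L_\theta)$, and then to use a sparsity argument on the companion matrix.

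Let $n$ denote the order of $L_\theta$, and set $d' := \deg l_\theta \leq d$. I would introduce the polynomial matrix $\tilde B(\theta) := l_\theta(\theta) \cdot B(L_\theta)(\theta)$, whose nonzero entries are either superdiagonal (all equal to $l_\theta$, of degree $d'$) or lie on the bottom row (the non-leading coefficients of $L_\theta$, of degree at most $d$). Setting $\Lambda := \prod_{i=0}^{p-1} l_\theta(\theta+i)$ and $\tilde M := \prod_{i=0}^{p-1} \tilde B(\theta+i)$, scalars commute out of matrix products and thus $B_p(L_\theta) = \Lambda^{-1} \tilde M$. The identity $\det(\lambda I - \Lambda^{-1}\tilde M) = \Lambda^{-n}\det(\Lambda \lambda I - \tilde M)$ together with the standard expansion in elementary symmetric functions $e_k$ then yields
\[
  \Xi_{\theta,\partial}(L_\theta) = \sum_{k=0}^n (-1)^k \Lambda^{1-k} e_k(\tilde M)\, (\partial^p)^{n-k}.
\]
By Theorem~\ref{formofresult}, $\Xi_{\theta,\partial}(L_\theta)$ has polynomial coefficients in $\theta$, so by linear independence of the powers of $\partial^p$ each $\Lambda^{1-k} e_k(\tilde M)$ must itself be polynomial; equivalently, $\Lambda^{k-1}$ divides $e_k(\tilde M)$ for every $k \geq 1$.

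The task then becomes bounding $\deg_\theta e_k(\tilde M)$. I would apply Cauchy--Binet to write each principal $k$-minor $\det \tilde M[I,I]$ as a sum, over sequences $I = I_0, I_1, \ldots, I_{p-1}, I_p = I$ of $k$-subsets, of products $\prod_j \det \tilde B(\theta+j)[I_j, I_{j+1}]$. A brief case analysis on the sparsity of $\tilde B$ then shows that any $k$-minor $\det \tilde B[J,J']$ has $\theta$-degree at most $(k-1)d' + d$: if $n \notin J$, the only possibly nonzero minor equals $\pm l_\theta^k$; and if $n \in J$, the nonzero minors are signed products of $k-1$ copies of $l_\theta$ with a single non-leading coefficient of $L_\theta$.

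Multiplying $p$ such minors yields $\deg_\theta e_k(\tilde M) \leq p\bigl((k-1)d' + d\bigr)$. Dividing by $\Lambda^{k-1}$, which has degree exactly $(k-1)pd'$, gives the desired bound $\deg_\theta\bigl(\Lambda^{1-k} e_k(\tilde M)\bigr) \leq pd$; the case $k = 0$ is immediate since the coefficient reduces to $\Lambda$, of degree $pd' \leq pd$. The main subtlety is the sparsity-based minor bound: a naive uniform estimate of $kd$ per $k$-minor would only deliver degree $\leq kpd$ and thus fail to establish the claim for $k \geq 2$.
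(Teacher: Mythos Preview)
Your argument is correct. The key step---bounding every $k\times k$ minor of $\tilde B$ by $(k-1)d'+d$ via the sparsity of the companion matrix---is exactly what is needed, and the Cauchy--Binet expansion together with the divisibility consequence of Theorem~\ref{formofresult} then gives the bound cleanly.

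The paper itself does not give a proof of this lemma: it simply refers to \cite[Lemma~3.9]{BoCaSc14}. Your write-up is therefore a self-contained argument where the paper has none. Since I cannot compare against the details in \cite{BoCaSc14} here, I can only say that your route---clearing denominators, expanding the characteristic polynomial in elementary symmetric functions, invoking the polynomiality statement from Theorem~\ref{formofresult} to get $\Lambda^{k-1}\mid e_k(\tilde M)$, and then using the companion shape to refine the naive degree bound on minors---is natural, complete, and likely close in spirit to the original. Your final remark is also well taken: the sparsity observation is precisely what rescues the bound for $k\geq 2$.
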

\begin{proof}
  See \cite[Lemma~3.9]{BoCaSc14}.
\end{proof}
From Lemma~\ref{degreeofresult}, we deduce that at the end of step~(8) we
have a list of (lists of) polynomials of degree linear in~$p$, which means
that the bit size of
the output of this
step is quadratic in~$N$. This seems to remove all hope of ending up
with a quasi-linear algorithm. Fortunately those polynomials lie
in $\mathbb{F}_p[\theta^p-\theta]$( see Theorem~\ref{formofresult}). Thus each of them can be
represented by data of bit size $O(d\log(p))$. We explain how in
Section~\ref{reverse_iso_section}.
\begin{remark}\label{arragementresult}
  This problem is also present at the end of step~(9), but is easy to solve
  as we only need to determine the coefficients of~$x^i$ when $i$ is a multiple
  of~$p$. Thus we in fact compute polynomials $P_p\in\mathbb{F}_p[x,Y]$
  such that $P_p(x^p,Y)=\chi(A_p(L))$ for all $p< N$.
\end{remark}

\subsection{Reverse isomorphism, computation modulo
$\theta^{d+1}$}\label{reverse_iso_section}
We know from Theorem~\ref{formofresult} that for
$L_\theta\in\mathbb{F}_p[\theta]\langle\partial\rangle$, 
the operator~$\Xi_{\theta,\partial}(L_\theta)$ has coefficients in 
$\mathbb{F}_p[\theta^p-\theta]$.
\begin{lemma}\label{inverse_iso_comput}
  Let $Q\in \mathbb{F}_p[\theta^p-\theta]$ be a polynomial of degree~$d$
  in~$\theta^p-\theta$ with $d< p$. Write:
  \[
    Q=\sum_{i=0}^d q_i(\theta^p-\theta)^i\quad\text{and}\quad
    Q=\sum_{i=0}^{dp} q'_i \theta^i.
  \]
  For all $i\leq d$, we have $q_i=(-1)^i q'_i$.
\end{lemma}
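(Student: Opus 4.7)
The plan is a direct binomial expansion followed by a degree count. Writing
\[
Q \;=\; \sum_{j=0}^{d} q_j\,(\theta^p-\theta)^j
\]
and expanding each factor with the binomial theorem gives
\[
(\theta^p-\theta)^j \;=\; \sum_{k=0}^{j}\binom{j}{k}(-1)^{j-k}\,\theta^{\,j+k(p-1)},
\]
so the only monomials appearing in $(\theta^p-\theta)^j$ have exponents of the form $j+k(p-1)$ with $0\leq k\leq j$. Substituting this into the expression for $Q$ and grouping by powers of $\theta$ will yield a formula for each $q'_i$ as a sum over pairs $(j,k)$ with $j+k(p-1)=i$.

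The next step is to observe that, under the hypothesis $i\leq d<p$, the equation $i=j+k(p-1)$ with $0\leq k\leq j\leq d$ forces $k=0$. Indeed, $k\leq j$ and $j=i-k(p-1)$ together give $kp\leq i<p$, hence $k=0$ and $j=i$. Thus exactly one term of the double sum contributes to the coefficient of $\theta^i$, namely the $k=0$ term of $(\theta^p-\theta)^i$, which contributes $(-1)^i q_i$. This immediately yields $q'_i=(-1)^i q_i$, i.e.\ $q_i=(-1)^i q'_i$.

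There is essentially no obstacle: the argument is a one-line binomial expansion plus the elementary inequality $kp\leq i<p\Rightarrow k=0$. The only point that deserves a brief mention is why the hypothesis $d<p$ is used, and this is exactly what rules out the ``aliasing'' contributions coming from higher powers of $\theta^p-\theta$; without it, $k\geq 1$ terms could collide with low $\theta$-degree monomials (for instance via the Frobenius-like identity $(\theta^p-\theta)^p=\theta^{p^2}-\theta^p$ in characteristic $p$), and the simple sign relation would fail.
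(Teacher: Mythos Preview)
Your proof is correct and follows essentially the same approach as the paper: the paper's proof is the one-liner ``$(-1)^i\theta^i$ is the only monomial of degree less than~$p$ in $(\theta^p-\theta)^i$,'' which is precisely the binomial-expansion-plus-degree-count you spell out. Your version simply makes explicit the inequality $kp\leq i<p\Rightarrow k=0$ that the paper leaves to the reader.
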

\begin{proof} This comes from the fact that $(-1)^i\theta^i$ is the only
  monomial of degree less than~$p$ in $(\theta^p-\theta)^i$.
\end{proof}
When~$p$ is strictly greater than~$d$, it
follows that we only need to compute the $\Xi_{\theta,\partial,p}$ modulo
$\theta^{d+1}$ where~$d$ is the highest degree of the coefficients of the
operator (in both variables~$x$ or~$\theta$), as one can see in
Algorithm~\ref{reverse_iso_algo}. We deduce the following lemma whose proof
is obvious.
\begin{algo}
  \begin{flushleft}
    \emph{Input:} $Q_\theta\in\mathbb{F}_p[\theta^p-\theta][Y]$, of
    degree $m$ in $Y$ and degree at most $dp$ in $\theta$,
    known
    modulo $\theta^{d+1}$.\\
    \emph{Output:} $Q_x\in\mathbb{F}_p[x,Y]$ such that
    $Q_x(x^p,\partial^p)=\varphi_p^{-1}(Q_\theta(\partial^p))$.
  \end{flushleft}
  \BlankLine
  \begin{enumerate}
    \item $Q_x\leftarrow 0$.
    \item For all $i\leq m$:
      \begin{enumerate}
        \item Let $Q_{\theta,i}$ be
          the coefficient of $\partial^i$ of $Q_\theta$ and write
          $Q_{\theta,i}=\sum_{j=0}^{d}q_{i,j} \theta^j+O(\theta^{d+1})$.
        \item $Q_x\leftarrow Q_x + \sum_{j=0}^{d}(-1)^j q_{i,j}
          x^jY^{i+j}$.
      \end{enumerate}
    \item \emph{Return:} $Q_x$.
    \end{enumerate}
    \caption{reverse\_iso}
    \label{reverse_iso_algo}
\end{algo}
\begin{lemma}
  If $Q_\theta\in\mathbb{F}_p[\theta^p-\theta][Y]$ is 
  of degree~$m$ in~$Y$ and~$dp$ in~$\theta$  with $d<p$, then
  Algorithm~\ref{reverse_iso_algo} computes $Q_x\in\mathbb{F}_p[x,Y]$ such
  that~$Q_x(x^p,\partial^p)=\varphi_p^{-1}(Q_\theta(\partial^p))$ in
  $O(dm\log(p))$ bit operations. 
\end{lemma}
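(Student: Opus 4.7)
The plan is to verify correctness in two stages (extracting the coefficients in the basis $((\theta^p-\theta)^j)_j$, then transporting them through $\varphi_p^{-1}$) and then bound the bit cost by simply counting operations in the loop.

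First I would argue that the inner loop reconstructs each coefficient $Q_{\theta,i}$ as an element of $\mathbb{F}_p[\theta^p-\theta]$. By Theorem~\ref{formofresult} the operator $Q_\theta(\partial^p)$ already has coefficients in $\mathbb{F}_p[\theta^p-\theta]$, so for each $i\le m$ we may write $Q_{\theta,i}=\sum_{j=0}^{d} c_{i,j}(\theta^p-\theta)^j$ with $c_{i,j}\in\mathbb{F}_p$. Lemma~\ref{inverse_iso_comput} identifies $c_{i,j}=(-1)^j q_{i,j}$, where $q_{i,j}$ is the coefficient of $\theta^j$ in $Q_{\theta,i}$ (and these are exactly the data available from the input modulo $\theta^{d+1}$, since we only need $j\le d<p$).

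Next I would apply $\varphi_p^{-1}$ termwise. Recall from Section~\ref{euler_op} that $\varphi_p^{-1}(\theta^p-\theta)=x^p\partial^p$, and that both $x^p$ and $\partial^p$ are central in $\mathbb{F}_p[x]\langle\partial^{\pm 1}\rangle$ (the former because in characteristic $p$ we have $\partial x^p=x^p\partial$, the latter by the centrality argument already used in the proof of the localization proposition). Hence $(x^p\partial^p)^j=x^{pj}\partial^{pj}$ and
\[
\varphi_p^{-1}\bigl(Q_\theta(\partial^p)\bigr)
=\sum_{i=0}^{m}\sum_{j=0}^{d}(-1)^{j}q_{i,j}\,x^{pj}\partial^{p(i+j)}.
\]
Setting $Q_x(X,Y):=\sum_{i,j}(-1)^j q_{i,j}X^{j}Y^{i+j}$, which is exactly what the algorithm accumulates, we get $Q_x(x^p,\partial^p)=\varphi_p^{-1}(Q_\theta(\partial^p))$, which is the required identity.

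For the complexity it suffices to count: the nested loops perform $O(dm)$ elementary operations in $\mathbb{F}_p$ (one sign flip, one multiplication by $q_{i,j}$ which is already an element of $\mathbb{F}_p$, one insertion into a sparse bivariate accumulator indexed by $(j,i+j)$), each carried out on residues of bit size $O(\log p)$. Since all of these operations are additions, negations and look-ups of already-reduced coefficients, each costs $O(\log p)$ bit operations, giving the announced $O(dm\log p)$ bound. The only point worth double-checking is that no actual $\mathbb{F}_p$-multiplication occurs (otherwise we would get a soft-$O$ bound), which is clear since the $q_{i,j}$ are simply copied with the sign $(-1)^j$ into new monomials; so no obstacle remains.
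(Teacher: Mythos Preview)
Your argument is correct and is essentially a detailed write-up of what the paper leaves implicit: the paper gives no proof of this lemma, stating only that it is ``obvious'' from Lemma~\ref{inverse_iso_comput} and the description of Algorithm~\ref{reverse_iso_algo}. Your two-step verification (recover the coefficients $c_{i,j}=(-1)^jq_{i,j}$ via Lemma~\ref{inverse_iso_comput}, then push each $(\theta^p-\theta)^j\partial^{pi}$ through $\varphi_p^{-1}$ using $\varphi_p^{-1}(\theta^p-\theta)=x^p\partial^p$ and centrality) is exactly the intended unpacking, and the bit-operation count is likewise the straightforward one. One very minor remark: you need not invoke Theorem~\ref{formofresult} to say that $Q_{\theta,i}\in\mathbb{F}_p[\theta^p-\theta]$, since this is already part of the hypothesis $Q_\theta\in\mathbb{F}_p[\theta^p-\theta][Y]$.
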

\begin{remark}\label{remark_on_d}
  In fact we can still compute $\varphi_p^{-1}$ if $p\leq d$ while
  only knowing the operator modulo $\theta^{d+1}$ but
  this is more tedious since there is no nice formula. In that case, with
  notation as in
  Lemma~\ref{inverse_iso_comput}, we have
  $q_i'=\sum_{k=0}^{\lfloor i/(p-1)\rfloor}(-1)^{i-kp}\binom{i-k(p-
  1)}{k}q_{i-k(p-1)}.$\\
  This relation is easily invertible since it is given by a triangular matrix
  with no zero on the diagonal.
  % and we did not do this work
\end{remark}

\subsection{Translation before the computation}

From the results of the previous subsection, 
we know that we only need to determine $\Xi_{\theta,\partial}$ modulo a
small power of $\theta$. Unfortunately, the companion matrix of an operator
in $\mathbb{F}_p[\theta]\langle\partial\rangle$,
even if the operator has polynomial coefficients, usually has its
coefficient in~$\mathbb{F}_p(\theta)$. In \cite{BoCaSc14}, the authors
solve this issue by injecting~$\mathbb{F}_p(\theta)$ in
$\mathbb{F}_p((\theta))$ and computing modulo a slightly higher power of $\theta$. In order
to minimize the degree of the polynomials used in the computation, we
take a different approach based on the following proposition.

\begin{proposition}
\label{prop:translate}
  Let $a\in\mathbb{F}_p$. We denote by
  $\tau_a:\mathbb{F}_p[x]\rightarrow \mathbb{F}_p[x]$ the shift
  automorphism $Q\mapsto Q(x+a)$.
  This automorphism extends to automorphisms of 
  $\mathbb{F}_p[x]\langle\partial\rangle$ and $\mathbb{F}_p[x,Y]$. Then
  $$\tau_a\circ \chi(A_p)=\chi(A_p)\circ
  \tau_a.$$
\end{proposition}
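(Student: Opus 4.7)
The plan is to reduce the identity to two independent commutation facts: that entry-wise application of $\tau_a$ commutes with the recursive construction of the $p$-curvature, and that it commutes with the formation of the characteristic polynomial.

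First I would verify that $\tau_a$ extends to a ring automorphism of $\mathbb{F}_p[x]\langle\partial\rangle$ fixing $\partial$. The only non-trivial point is the preservation of the Leibniz rule: since $\tau_a(f)' = \tau_a(f')$ for every $f\in\mathbb{F}_p[x]$, one checks
\[
\partial\cdot\tau_a(f) = \tau_a(f)\partial + \tau_a(f)' = \tau_a(f\partial + f') = \tau_a(\partial\cdot f),
\]
so the extension is well defined. The same fact shows that the companion matrix $A(\tau_a L)$ of $\tau_a L$ is the entry-wise shift of the companion matrix $A(L)$ of $L$, since the companion matrix is read directly off the coefficients of $L$; and $\tau_a$ extends without obstruction to $\mathbb{F}_p(x)$, which handles the denominators that appear in $A(L)$.

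Next I would prove by induction on $i$ that the iterates of~\eqref{eqn:reccursive_sequence} satisfy $A_i(\tau_a L) = \tau_a(A_i(L))$, with $\tau_a$ applied entry-wise to matrices. The base case $A_1 = -A$ is immediate; the inductive step rests only on the fact that entry-wise differentiation commutes with the entry-wise shift and that the shift is multiplicative, so from $A_{i+1} = A_i' - A\cdot A_i$ one reads $A_{i+1}(\tau_a L) = (\tau_a(A_i(L)))' - \tau_a(A)\cdot \tau_a(A_i(L)) = \tau_a(A_{i+1}(L))$. Specializing to $i = p$ yields $A_p(\tau_a L) = \tau_a(A_p(L))$.

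Finally, since the characteristic polynomial is assembled from the matrix entries by ring operations only, any ring endomorphism $\psi$ applied entry-wise satisfies $\chi(\psi(M))(Y) = \psi(\chi(M)(Y))$, with $\psi$ acting on the coefficients of the polynomial in $Y$ — which for $\psi = \tau_a$ is exactly the action on $\mathbb{F}_p[x,Y]$ from the statement. Combining the two steps gives
\[
\chi(A_p(\tau_a L)) = \chi(\tau_a(A_p(L))) = \tau_a(\chi(A_p(L))),
\]
which is the claim. I expect no serious obstacle beyond bookkeeping: the whole argument amounts to pushing $\tau_a$ past the elementary operations (differentiation, matrix multiplication, determinant), and the only thing to be careful with is the passage through $\mathbb{F}_p(x)$ forced by the denominators of $A(L)$.
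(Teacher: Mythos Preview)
Your proof is correct and follows essentially the same route as the paper's: extend $\tau_a$ via the commutation $\tau_a(f)'=\tau_a(f')$, deduce $A(\tau_a L)=\tau_a(A(L))$, propagate this through the recursion~\eqref{eqn:reccursive_sequence} to get $A_p(\tau_a L)=\tau_a(A_p(L))$, and conclude because the characteristic polynomial commutes with a ring endomorphism applied entry-wise. You have simply spelled out each step in more detail than the paper does.
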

\begin{proof}
  We know that $\tau_a(f)'=\tau_a(f')$ for all $f\in\mathbb{F}_p[x]$. We can
  thus  extend $\tau_a$ to $\mathbb{F}_p[x]\langle\partial\rangle$. Now,
  since for any $L$, the operator~$\tau_a(L)$ has the same order as $L$, 
  we get that
  $A(\tau_a(L))=\tau_a(A(L))$ (where~$A(L)$ is the companion matrix of $L$).
  Now with the relation between $\tau_a$ and
  derivation we recursively extend that equality using 
  (\ref{eqn:reccursive_sequence}) to get
  $\tau_a(A_p(L))=A_p(\tau_a(L))$. Since $\tau_a$ is an endomorphism, the
  result follows.
\end{proof}
From Proposition~\ref{prop:translate}, we deduce that we can shift an
operator before computing the
characteristic polynomials of its $p$-curvatures, and do the opposite
translation on those to get the desired result. It is especially useful
because of the following lemma.
\begin{lemma}\label{leadingcoeff}
  Let $L_x\in\mathbb{Z}[x]\langle\partial\rangle$ be an operator and denote
  by $l_x\in\mathbb{Z}[x]$ its
  leading coefficient. If $l_x(0)\neq 0$ then $\varphi(L_x)$ has
  $l_x(0)\in\mathbb{Z}$ as its leading coefficient.
\end{lemma}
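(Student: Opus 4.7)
The plan is to compute $\varphi(L_x)$ explicitly enough to read off the coefficient of the highest power of~$\partial$ that appears. First I would establish, by induction on~$k$, the closed form
\[\varphi(x^k) = \theta(\theta-1)(\theta-2)\cdots(\theta-k+1)\,\partial^{-k}.\]
The base cases $k=0,1$ are immediate from $\varphi(x) = \theta\partial^{-1}$, and the induction step rewrites $\varphi(x^{k+1}) = \theta\partial^{-1}\cdot\varphi(x^k)$ using $\partial^{-1} g(\theta) = g(\theta-1)\partial^{-1}$, which is the case $i=-1$ of the relation $\partial^i g(\theta) = g(\theta+i)\partial^i$ already established for the Laurent extension.

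Next I would write $L_x = \sum_{i=0}^{n} f_i(x)\partial^i$ with $f_n = l_x$, expand each $f_i = \sum_k a_{i,k}\, x^k$, and apply $\varphi$. Collecting terms by powers of~$\partial$ yields
\[\varphi(L_x) = \sum_{j\in\mathbb{Z}}\left(\sum_{i-k=j,\ 0\le i\le n,\ k\ge 0} a_{i,k}\,\theta(\theta-1)\cdots(\theta-k+1)\right)\partial^j.\]
Since every admissible term satisfies $j = i-k \le n$, no power $\partial^j$ with $j>n$ occurs, and for $j=n$ the only admissible pair is $(i,k)=(n,0)$, contributing the constant $a_{n,0} = l_x(0)$.

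The hypothesis $l_x(0)\neq 0$ then guarantees that $\partial^n$ is precisely the highest power of~$\partial$ appearing in $\varphi(L_x)$ and that its coefficient is the integer $l_x(0)\in\mathbb{Z}$, as claimed. There is no real obstacle in this proof: the only delicate point is applying the commutation identity for negative powers of~$\partial$, but the Laurent ring $\mathbb{Z}[\theta]\langle\partial^{\pm 1}\rangle$ was constructed precisely so that this manipulation is legal. Everything else is bookkeeping with falling factorials, and the fact that only the constant term of $l_x$ survives at the top of the $\partial$-degree filtration is the real content of the statement.
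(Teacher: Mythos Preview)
Your proof is correct and follows essentially the same approach as the paper: both compute $\varphi(x^k\partial^j)$ as a polynomial in~$\theta$ (depending only on~$k$) times $\partial^{j-k}$, and conclude that the top $\partial$-degree term can only arise from $(j,k)=(n,0)$. The paper is terser and does not identify the polynomial explicitly as the falling factorial $\theta(\theta-1)\cdots(\theta-k+1)$, but the structure of the argument is identical.
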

\begin{proof}
  A straightforward computation shows that
  $\varphi(x^i\partial^j)=p_i(\theta)\partial^{j-i}$ with $p_i(\theta)$
  being a
  polynomial only dependent on $i$ (and not on $j$). Thus the leading coefficient of
  $\varphi(L_x)$ can only come from the constant coefficient of $l_x$ if
  this one is not $0$.
\end{proof}
In our setting, the fact that $\varphi(L_x)$ has a constant leading coefficient 
means that its companion matrix (see \S\ref{ssec:defcharpoly}) has its coefficients in
$\mathbb{Q}[\theta]$, implying that we can do all the computations modulo $\theta^{d+1}$.
Lemma~\ref{leadingcoeff} shows that we can shift our starting operator by $a\in\mathbb{Z}$ where $a$ is not a
root of its leading coefficient to place ourselves in that setting.\par
Since translating back all the characteristic polynomials
(the $P_p$ in fact, see Remark~\ref{arragementresult})
at the end of the 
  computation is basically the same as translating a list of $O(Nm)
  $ univariate polynomials of degree $d$, it can be done in
  $\tilde{O}(Nmd)$ bit operations (for example
  with binary splitting), with $m$
  being the order of the operator and $d$ the maximum degree of its
  coefficients.

\subsection{Computing a matrix factorial modulo $p$ for a large amount of
primes $p$}
Let $M(\theta)\in \mathscr{M}_m(\mathbb{Z}[\theta])$ be a square matrix
of size $m$ with coefficients of
degree less than $d$. In this subsection we review the algorithm of
\cite{CoGeHa14,Harvey14} applied to the computation of the following matrix
factorial :
$$M(\theta) \cdot M(\theta+1)\cdots M(\theta+p-1)\mod (p,\theta^d)$$ for
all primes $p< N$.
Though very similar, the setting of \cite{Harvey14}
is slightly different from ours as it concerns only integer matrices
 and considers a different kind of products.
 For this reason, we
 prefer to take some time to restate the algorithm in full and, at the
 same time, take the opportunity to set up notations.\par
 Since the method of \cite{CoGeHa14} computes products of $p{-}1$ entries 
 modulo some power of $p$, we will compute $M(\theta+1)\cdots
 M(\theta+p-1)\mod (p,\theta^d)$ for all
$p$, and then left-multiply by $M(\theta)$.\par

Let $\eta:=\lceil\log_2(N)\rceil$. For all $i$ and $j$ with
$0\leq i\leq \eta$ and $0\leq j<2^i$, we denote
$U_{i,j}:=\left\{k\in\Nbb\,\left|\,j\frac{N}{2^i}<k\leq(j+
1)\frac{N}{2^i}\right.\right\}.$\par
It follows from the definition that for all $0\leq i<\eta$ and all $0\leq j<2^i$,
$U_{i,j}=U_{i+1,2j}\cup U_{i+1,2j+1}$. Furthermore, for $i=\eta$, the~$U_{i,j}$ are
either empty or a singleton.\\

From this, we introduce $T_{i,j}:=\prod_{k\in U_{i,j}}M(\theta+k)\mod
\theta^d$, with the
product being made by sorting elements of $U_{i,j}$ in ascending order, and
$S_{i,j}:=\prod_{\substack{p\in U_{i,j}\\p\text{ prime}}}p$. From now on, we
consider that the~$T_{i,j}$ are elements of
$\mathscr{M}_m(\nicefrac{\mathbb{Z}[\theta]}{\theta^d})$.
From the properties of $U_{i,j}$, we deduce that
$T_{i,j}=T_{i+1,2j}T_{i+1,2j+1}$ and $S_{i,j}=S_{i+1,2j}S_{i+1,2j+1}$.\\

These relations allow us to fill binary trees containing the~$T_{i,j}$
and~$S_{i,j}$ as their nodes from the bottom. Furthermore, filling those
trees is nothing more than computing a factorial by binary splitting, 
and keeping the intermediate steps in memory.\par

To see how to apply this to our problem we suppose that $p\in U_{\eta,j}$ for a
certain $j$. A direct computation gives:
\[\begin{array}{l}
  M(\theta+1)\cdot M(\theta+2)\cdots M(\theta+p-1) \mod (p,\theta^d) \smallskip \\
\hspace{7ex}= T_{\eta,0}T_{\eta,1}\cdots T_{\eta,j-1} \mod S_{\eta,j}.
\end{array}\]
This motivates the following definition: for all $i, j$ with $0\leq i\leq \eta$ 
and $0\leq j<2^i$, we set $W_{i,j}:=\prod_{k=0}^{j-1}T_{i,k}\mod S_{i,j}$.
The following lemma is easily checked.
\begin{lemma}
  For all $i$ and $j$ such that the following quantities are well defined, 
  $W_{i+1,2j}=W_{i,j}\mod S_{i+1,2j}$ and
  $W_{i+1,2j+1}=W_{i,j}T_{i+1,2j}\mod S_{i+1,2j+1}$.
\end{lemma}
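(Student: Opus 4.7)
The plan is to unfold the three definitions ($T_{i,j}$, $S_{i,j}$, $W_{i,j}$) and use the two refinement identities $T_{i,j}=T_{i+1,2j}T_{i+1,2j+1}$ and $S_{i,j}=S_{i+1,2j}S_{i+1,2j+1}$, together with the obvious fact that reduction modulo a divisor is compatible with the matrix product over $\mathscr{M}_m(\mathbb{Z}[\theta]/\theta^d)$. Since the lemma really is just bookkeeping, I would not introduce any new construction.

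The first step is to rewrite $W_{i,j}$ in terms of the finer level $i+1$. Expanding
\[
W_{i,j}=\prod_{k=0}^{j-1}T_{i,k}\pmod{S_{i,j}}
\]
and substituting $T_{i,k}=T_{i+1,2k}T_{i+1,2k+1}$, the product telescopes (\emph{in order}, which matters since multiplication is noncommutative) to $\prod_{k=0}^{2j-1}T_{i+1,k}\pmod{S_{i,j}}$. Here one must observe that $U_{i+1,2k}$ consists of integers strictly smaller than those of $U_{i+1,2k+1}$, so the ordering used in the definition of the $T$'s is consistent with the telescoping.

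Next, from $S_{i,j}=S_{i+1,2j}S_{i+1,2j+1}$ it follows that both $S_{i+1,2j}$ and $S_{i+1,2j+1}$ divide $S_{i,j}$. Reducing the identity above modulo $S_{i+1,2j}$ immediately yields
\[
W_{i,j}\equiv\prod_{k=0}^{2j-1}T_{i+1,k}=W_{i+1,2j}\pmod{S_{i+1,2j}},
\]
which is the first equality. For the second, reducing modulo $S_{i+1,2j+1}$ instead and then right-multiplying by $T_{i+1,2j}$ gives
\[
W_{i,j}\,T_{i+1,2j}\equiv\Bigl(\prod_{k=0}^{2j-1}T_{i+1,k}\Bigr)T_{i+1,2j}=\prod_{k=0}^{2j}T_{i+1,k}=W_{i+1,2j+1}\pmod{S_{i+1,2j+1}}.
\]

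There is no real obstacle here: the only points requiring care are (i) the noncommutativity of the $T_{i+1,k}$, which forces us to respect the left-to-right ordering throughout, and (ii) the fact that we work in $\mathscr{M}_m(\mathbb{Z}[\theta]/\theta^d)$ and only reduce the integer coefficients modulo the $S_{*,*}$, so the two reductions (mod $\theta^d$ and mod $S$) are independent and commute with the product.
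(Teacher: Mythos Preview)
Your proof is correct and is exactly the intended verification: the paper itself does not spell out an argument beyond ``easily checked'', and your unfolding of $W_{i,j}$ via the refinement identities $T_{i,k}=T_{i+1,2k}T_{i+1,2j+1}$ and $S_{i,j}=S_{i+1,2j}S_{i+1,2j+1}$ is the natural (and essentially only) way to do it. The care you take with the noncommutative ordering and with the independence of the reductions modulo $\theta^d$ and modulo $S_{*,*}$ is appropriate and nothing further is needed.
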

Thus we can compute the $W_{\eta,j}$ by filling a binary tree from the top 
starting from $W_{0,0}=1$.
This proves the correctness of Algorithm~\ref{matrix_factorial}, while its
complexity is addressed in the next proposition.

\begin{algo}
  \begin{flushleft}
    \emph{Input:} $M(\theta)\in
    \mathscr{M}_m(\mathbb{Z}[\theta])$ with coefficients of degree less
    than $d$, $\mathcal{P}$ a list of primes smaller than $N$.\\
    \emph{Output:} A list containing $M(\theta)M(\theta+1)\cdots
    M(\theta+p-1)\mod (p,\theta^d)$ for all $p$ in $\mathcal{P}$.
  \end{flushleft}
  \BlankLine
  \begin{enumerate}
    \item $\eta\leftarrow \lceil \log_2(N)\rceil$.
    \item Fill $T_{\eta,\_}$ and $S_{\eta,\_}$.
    \item Compute the binary trees $T$ and $S$.
    \item $W_{0,0}\leftarrow 1$.
    \item For $i$ going from $0$ to $\eta-1$:
      \begin{enumerate}
        \item For $j$ going from $0$ to $2^i-1$:
          \begin{enumerate}
            \item $W_{i+1,2j}\leftarrow W_{i,j}\mod S_{i+1,2j}$.
            \item $W_{i+1,2j+1}\leftarrow W_{i,j}T_{i+1,2j}\mod
              S_{i+1,2j+1}$.
          \end{enumerate}
        \end{enumerate}
      \item Construct $\prod$ the list of $W_{\eta,j}$ 
        where $S_{\eta,j}\in \mathcal{P}$.
      \item Do the left multiplication by $M(\theta)$ on the elements of
        $\prod$.
      \item Return: $\prod$.
  \end{enumerate}
  \caption{matrix\_factorial}
  \label{matrix_factorial}
\end{algo}

\begin{proposition}\label{proof_cost_tree}
  This algorithm has a cost of
  $$\tilde{O}\big(m^{\omega}d N(n+d\log(N)+\log(m))\big)$$
  bit operations, where $n$ is the maximum bit size of the integers 
  in the matrix $M(\theta)$.
\end{proposition}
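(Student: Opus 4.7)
The plan is to analyse the cost level by level along the two passes of the algorithm: the bottom-up construction of the $T$- and $S$-trees (steps 2--3), and the top-down construction of the $W$-tree (steps 4--5); then verify that the remaining work is absorbed. First I would bound the bit sizes of the entries. A leaf $T_{\eta,k} = M(\theta + k)$ with $k < N$ has entries of degree $<d$ in $\theta$ whose integer coefficients have bit size $O(n + d\log N)$, since a Taylor shift by $k$ multiplies the coefficients by binomials in $k$. Combining two level-$(i{+}1)$ nodes into one level-$i$ node, each entry of the product is a sum of $m$ products of polynomials of degree $<d$ truncated modulo $\theta^d$, so its bit size grows by at most $O(\log(md))$ beyond the sum of the two children's bit sizes. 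An easy induction then gives $b_i = O\!\big((N/2^i)(n + d\log N + \log m)\big)$ for the bit size at level $i$ of the $T$-tree. For the $S$-tree, the prime number theorem yields $\log S_{i,j} = \tilde O(N/2^i)$.

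The fundamental atomic cost is that multiplying two $m \times m$ matrices with entries in $\mathbb{Z}[\theta]/\theta^d$ of bit size $b$ can be done in $\tilde O(m^\omega d\, b)$ bit operations: evaluate at $O(d)$ integer points of bit size $\tilde O(b)$, perform $O(d)$ integer matrix products each at cost $\tilde O(m^\omega b)$, and interpolate. Hence the $2^i$ multiplications building level $i$ of the $T$-tree contribute $\tilde O\!\big(2^i \cdot m^\omega d\, b_{i+1}\big) = \tilde O\!\big(m^\omega d\, N (n + d\log N + \log m)\big)$, a quantity independent of $i$; summing over the $\eta = O(\log N)$ levels and absorbing the logarithm into $\tilde O$ yields exactly the claimed bound. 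The $S$-tree is a primorial computation done by binary splitting in $\tilde O(N)$ bit operations and is absorbed.

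For the top-down pass, each $W_{i,j}$ is reduced modulo $S_{i,j}$, so its entries have bit size $\tilde O(N/2^i)$. The pure-reduction step $W_{i+1,2j} \equiv W_{i,j} \pmod{S_{i+1,2j}}$ costs $\tilde O(m^2 d\, N/2^i)$ and is dominated by the other step $W_{i+1,2j+1} \equiv W_{i,j}\, T_{i+1,2j} \pmod{S_{i+1,2j+1}}$, which multiplies matrices with entries of bit size at most $O(b_{i+1})$; applying the same atomic estimate gives $\tilde O(m^\omega d\, N (n + d\log N + \log m))$ per level. The Taylor shifts populating the leaves can be computed in advance within the same budget using subproduct-tree multipoint evaluation, and the final left-multiplication by $M(\theta)$ at each retained prime $p$, done after reducing $M$ modulo $p$, costs $\tilde O(m^\omega d \log p)$ per prime and $\tilde O(m^\omega d N)$ in total; both are absorbed. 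The only genuine obstacle is the bookkeeping: verifying that the geometric blow-up of bit sizes down the tree exactly compensates the geometric decay in the number of nodes, so that each level contributes the same cost and the $\log N$ factor from the number of levels can be swallowed by $\tilde O$.
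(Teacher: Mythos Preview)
Your proposal is correct and follows essentially the same approach as the paper: bound the leaf bit sizes via the Taylor shift, track the doubling of bit sizes up the $T$-tree to get $b_i = O\big((N/2^i)(n + d\log N + \log m)\big)$, observe that each level of the bottom-up and top-down passes costs the same $\tilde O\big(m^\omega d\,N(n+d\log N+\log m)\big)$, and sum over $O(\log N)$ levels. The only cosmetic difference is that in the top-down pass the paper first reduces every $T_{i,j}$ modulo the neighbouring $S_{i,j+1}$ (at cost $\tilde O(Nm^2d(n+d))$) and then performs all $W$-tree multiplications with integers already bounded by the $S_{i,j}$ (at cost $\tilde O(m^\omega dN)$), whereas you multiply $W_{i,j}$ directly by the unreduced $T_{i+1,2j}$; both routes land on the same $\tilde O$ bound.
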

\begin{proof}
 The computation of the binary tree $S$ is less costly than that 
  of $T$, so
  we do not consider it. Let us evaluate the complexity of the computation of
  $T$. We need to know the bit size of the integers at each level of $T$. We
  use the following lemma.
  \begin{lemma}
    For any $a\leq N$, all the integers appearing in $M(\theta+a)$ have
    bit size at most $n+d(1+\log_2(N))$.
  \end{lemma}
  \begin{proof}
    Let $Q\in\mathbb{Z}[\theta]$ of degree less than $d$ appearing in 
    $M(\theta)$.
    Then we can write
    $$Q(\theta+a)=\sum_{j=0}^{d-1}\left(\sum_{i=j}^{d-
    1}\binom i j q_ia^{i-j}\right)\theta^j$$ 
      where the $q_i$ are the coefficients of $Q$. Moreover, we know that
      all the $q_i$ are at most $2^n$. Thus
      the coefficients of $Q(\theta+a)$ are less than~$2^n
      N^{d-1}\sum_{i=j}^{d-1}\binom i j\leq
        2^{n+d}N^d$.
  \end{proof}
  We now resume the proof of Proposition~\ref{proof_cost_tree}.
  If $\Delta_1$ and $\Delta_2$ are matrices in
  $\mathscr{M}_m\left(\nicefrac{\mathbb{Z}[\theta]}{\theta^d}\right)$ with
  integers of bit size at most $n_1$, then $\Delta_1\Delta_2$ has integers of bit size at most
  $2n_1+\log_2(dm)$. It follows that the integers in the matrices
  $A_{i,j}$ are of bit size at most:
  $$\begin{array}{l}
    2^{\eta-i}(n+d(1+\log_2(N)))+(2^{\eta-i}-1)\log_2(dm) \smallskip \\
    \hspace{5ex} = O(2^{\eta-i}(n+d\log_2(N)+\log_2(m))).
  \end{array}$$\par
  The computation of $T$ is reduced to the computation of its two sub-trees, 
  followed by a multiplication of two square matrices
  of size $m$ with polynomial coefficients of degree $d$ and integers of
  bit size~$O(2^{\eta-1}(n+d\log_2(N)+\log_2(m)))$.
  Since the bit size of the integers is halved at each
  level, we finally find, using that $2^{\eta}\leq 2N$, that the computation
  of $T$ can be done in $\tilde{O}(m^{\omega}d N(n+d\log_2(N)+\log_2(m))$
  bit operations.

  The cost of computing $W$
  is the same as that of reducing $T_{i,j}\mod S_{i,j+1}$ whenever both quantities
  are well defined, and then of computing recursively
  the $W_{i,j}$ using only integers smaller than~$S_{i,j}$. The first
  step can be done
  in $\tilde{O}(Nm^2d(n+d))$ bit operations, while the second requires
  $\tilde{O}(m^{\omega}d N)$ bit operations.
\end{proof}
\subsection{Final algorithm}\label{mainalgorithm}
The most important pieces of our main algorithm are now in place, we are almost
ready to write down its final version. Before doing this, we analyze the
cost of converting an operator in $\mathbb{Z}[x]\langle\partial\rangle$ to 
its counterpart in
$\mathbb{Z}[\theta]\langle\partial^{\pm 1}\rangle$.

\begin{proposition}
  For any operator $L\in\mathbb{Z}[x]\langle\partial\rangle$, of order $m$ 
  with coefficients of degree at most $d$, with integer coefficients of bit 
  size at most $n$, the computation of $\varphi(L)$, can be done in
  $\tilde{O}(d(m+d)(n+d))$ bit operations.\\
  Furthermore the resulting operator in the variable $\theta$ has its integer
  coefficients of bit size $O(n+d\log_2(d))$.
\end{proposition}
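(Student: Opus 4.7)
My plan is to exhibit a divide-and-conquer algorithm for $\varphi$ modelled on the product-tree technique of the previous subsection, and then to bound the bit size of the resulting coefficients by a direct combinatorial estimate.

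First I would put $\varphi(L)$ into a convenient closed form. From $\partial\theta=(\theta+1)\partial$ one derives $\partial^{-1}\theta=(\theta-1)\partial^{-1}$, and a short induction then yields $\varphi(x^i\partial^j)=\theta^{\underline{i}}\,\partial^{j-i}$, where $\theta^{\underline{i}}:=\theta(\theta-1)\cdots(\theta-i+1)$ denotes the falling factorial. Decomposing $L=\sum_{i=0}^d x^i b_i(\partial)$ with $b_i\in\mathbb{Z}[\partial]$ of degree at most $m$, this gives
\[
\varphi(L)\;=\;\sum_{i=0}^d\theta^{\underline{i}}\,Q_i(\partial),\qquad Q_i:=\partial^{-i}b_i(\partial)\in\mathbb{Z}[\partial^{\pm1}].
\]
In this expression the $\theta$-polynomials sit on the left of the Laurent polynomials in $\partial$, i.e.\ in the canonical left-coefficient form on $\mathbb{Z}[\theta]\langle\partial^{\pm 1}\rangle$.

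Next I would compute this sum by divide-and-conquer. For each interval $[a,b)\subseteq[0,d+1)$ set
\[
p_{a,b}:=\prod_{k=a}^{b-1}(\theta-k),\qquad T_{a,b}:=\sum_{i=a}^{b-1}(\theta-a)^{\underline{i-a}}\,Q_i(\partial),
\]
so that the target is $\varphi(L)=T_{0,d+1}$. The splitting identity $(\theta-a)^{\underline{i-a}}=(\theta-a)^{\underline{c-a}}(\theta-c)^{\underline{i-c}}$, valid for $a\leq c\leq i$, gives the recurrence $p_{a,b}=p_{a,c}\,p_{c,b}$ and $T_{a,b}=T_{a,c}+p_{a,c}\,T_{c,b}$ for $c=\lfloor(a+b)/2\rfloor$. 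The crucial observation is that in the product $p_{a,c}\cdot T_{c,b}$ the factor $p_{a,c}\in\mathbb{Z}[\theta]$ multiplies on the left, so in left-canonical form it acts coefficient-wise on the $\partial^k$-coefficients of $T_{c,b}$ as an ordinary polynomial multiplication in $\theta$; no commutator between $\theta$ and $\partial$ is ever evaluated at this step.

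For the complexity, at a node of width $w=b-a$ the polynomial $p_{a,b}$ has degree $w$ in $\theta$ with integer coefficients of bit size $\tilde{O}(w)$ (classical bound on elementary symmetric functions of $a,a+1,\ldots,b-1$), while $T_{a,b}$ has $\theta$-degree $<w$, $\partial$-support of width at most $m+w$, and integer coefficients of bit size $\tilde{O}(n+w)$. Combining two children therefore costs, via FFT-based multiplication, $\tilde{O}(w(m+w)(n+w))$ bit operations per node. Summing $2^l$ such contributions over the $O(\log d)$ levels of the binary recursion tree (widths $w_l=d/2^l$) telescopes to $\tilde{O}(d(m+d)(n+d))$, dominated by the root level. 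The bound on the output bit size then follows at once: expanding $\theta^{\underline{i}}=\sum_j s(i,j)\theta^j$ with $|s(i,j)|\leq i!\leq d!$, each coefficient of $\varphi(L)$ is a $\mathbb{Z}$-linear combination of at most $d+1$ products $a_{i,i+k}\,s(i,j)$ of bit size at most $n+\log_2(d!)$, yielding $O(n+d\log_2 d)$.

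The hard part is arranging the divide-and-conquer so that no noncommutativity between $\theta$ and $\partial$ ever has to be dealt with at the combine step; the shifted falling-factorial splitting identity above is precisely what makes all intermediate multiplications reduce to ordinary commutative polynomial products in $\theta$, after which the complexity analysis is a routine binary-splitting telescoping analogous to the one in Proposition~\ref{proof_cost_tree}.
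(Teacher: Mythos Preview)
Your argument is correct, but it is considerably more explicit than the paper's. The paper does not redesign an algorithm here: it simply invokes the procedure \texttt{x\_d\_to\_theta\_d} from \cite[Section~4.1]{BoCaSc14}, which already runs in $\tilde{O}((m+d)d)$ ring operations, and then observes that over $\mathbb{Z}$ the intermediate integers never exceed the output integers, whose bit size is bounded by $|l'_{i,j}|\leq 2^{n+d+1}d^d$ via the same Stirling-type estimate you use. Multiplying the operation count by this bit bound gives the claimed $\tilde{O}(d(m+d)(n+d))$. Your approach instead builds a self-contained binary-splitting scheme from the falling-factorial identity $\varphi(x^i\partial^j)=\theta^{\underline{i}}\partial^{j-i}$ and the splitting $(\theta-a)^{\underline{i-a}}=p_{a,c}\,(\theta-c)^{\underline{i-c}}$; the observation that the left factor $p_{a,c}\in\mathbb{Z}[\theta]$ acts on $T_{c,b}$ by ordinary commutative polynomial multiplication in $\theta$ is exactly what keeps your per-node cost under control. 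The two routes yield the same bounds; yours is independent of the reference and makes the bit-growth analysis level-by-level, while the paper's is a two-line reduction to an existing black box.
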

\begin{proof}
  From \cite[Section~4.1]{BoCaSc14} we get that this computation over a
  ring $R$ can be done
  in $\tilde{O}((m+d)d)$ algebraic operations in $R$. Following their
  algorithm, we can show that, when $R=\mathbb{Z}$, intermediate computations
  do not produce integers larger than those of the final result. Moreover, if
  $$\varphi\bigg(\sum_{\substack{0\leq i\leq d\\0\leq
  j\leq m}}l_{i,j}x^i\partial^j \bigg)=
  \sum_{\substack{0\leq i\leq d\\-d\leq j\leq
  m}}l'_{i,j}\theta^i\partial^j$$
  the estimation $|l_{i,j}|\leq 2^n$ implies $|l'_{i,j}|\leq 2^{n+d+1}d^d$.
  Putting all together, we get the announced result.
\end{proof}
  Note that for an operator $L\in\mathbb{Z}[x]\langle\partial\rangle$ of
  order $m$ with coefficients of degree at most $d$, $\varphi(L)$ has
  nonzero coefficients for powers of $\partial$ varying from $-d$ to $m$,
  making the square matrices used in Algorithm~\ref{finalalgorithm} of size
  at most $m+d$.\par
  We now present the final algorithm in Algorithm~\ref{finalalgorithm}.

\begin{algo}
  \begin{flushleft}
    \emph{Input:} $L_x\in\mathbb{Z}[x]\langle\partial\rangle$ of order $m$,
    with coefficients of degree at most~$d$ and integer coefficients of
    bit size at most $n$, $N\in\Nbb$.\\
    \emph{Output:} A list of polynomials $P_p\in\mathbb{F}_p[x,Y]$ such
    that $P_p(x^p,Y)=\chi(A_p(L))$ for all primes $p< N$, except a finite
    number not depending on $N$.
  \end{flushleft}
  \BlankLine
  \begin{enumerate}
    \item $l_x\leftarrow$ the leading coefficient of $L_x$.
    \item $a\leftarrow 0$.
    \item If $l_x(0)=0$ do:
      \begin{enumerate}
        \item Shift $L_x$ by $b$ with $b\in\mathbb{Z}$ not a root of
          $l_x$.
        \item $a\leftarrow b$.
      \end{enumerate}
      \textbf{Cost: }\emph{$\tilde{O}(md(n+d))$ bit operations.}
    \item Compute $L_\theta\partial^{-k}:=\varphi(L_x)$ with
      \texttt{x\_d\_to\_theta\_d} from \cite[Section~4]{BoCaSc14}.\\
      \textbf{Cost: }\emph{$\tilde{O}((m+d)(n+d)d)$ bit operations.}
    \item $d\leftarrow$ the maximum degree of the coefficients of
      $L_\theta$.
    \item $l_\theta\leftarrow$ the leading coefficient of $L_\theta$.\\
      \emph{It has been made to be an integer.}
    \item Construct $M(\theta)=l_\theta\cdot B(L_\theta)$.
    \item Compute the list $\mathcal{P}$ of all primes $p$ that do not
      divide $l_\theta$ with $d+1\leq p< N$.\\
      \textbf{Cost: }\emph{$\tilde{O}(N)$ bit operations (see 
      \cite[Proposition~2.1]{CoGeHa14}).}
    \item Compute the list $\mathcal{L}$ of
      $M(\theta)\cdots M(\theta+p-1)\bmod (\theta^{d+1}
      ,p)$ for all $p$ in $\mathcal{P}$ using \emph{matrix\_factorial}.\\
      \textbf{Cost: }\emph{$\tilde{O}((m+d)^\omega(n+d) d N)$ bit
      operations.}
    \item Divide all elements of $\mathcal{L}$ by $l_\theta$.\\
      \textbf{Cost: }\emph{$O(N(m+d)^2d$ bit operations.}
    \item Compute the list $\mathcal{C}$ of the characteristic polynomials of elements 
      of $\mathcal{L}$.\\
      \textbf{Cost:} \emph{$\tilde{O}(N(m+d)^{\Omega_1}d)$ bit operations.}
    \item Multiply the elements of $\mathcal{C}$ by $l_\theta$.\\
      \textbf{Cost: }\emph{$\tilde{O}(N(m+d)d)$ bit operations.}
    \item Compute the image by $\varphi_p^{-1}$ of elements of $\mathcal{C}$ using
      \emph{reverse\_iso}.\\
      \textbf{Cost: }\emph{$\tilde{O}(Nd(m+d))$ bit operations.}
    \item Divide the polynomials obtained by $l_x$ and $Y^{-k}$.\\
      \textbf{Cost: }\emph{$\tilde{O}(Nmd)$ bit operations.}
    \item If $a\neq 0$, shift the polynomials obtained by $-a$.\\
      \textbf{Cost: }\emph{$\tilde{O}(Nmd)$ bit operations.}
    \end{enumerate}
    \caption{charpoly\_p\_curv}
    \label{finalalgorithm}
\end{algo}
\begin{theorem}\label{finalcomplexity}
  For any operator $L\in\mathbb{Z}[x]\langle\partial\rangle$, 
  Algorithm~\ref{finalalgorithm}
  computes a list of polynomials $P_p\in\mathbb{Q}[x,Y]$ for all primes 
  $p< N$ except a finite number not depending on $N$, such that $P_p(x^p,
  Y)=\chi(A_p(L))$
  in 
  \[\tilde{O}\big(Nd((n+d)(m+d)^\omega+(m+d)^{\Omega_1})\big)\] bit
  operations, where $m$ is the order of the operator, $d$ is the maximum degree
  of its coefficients
  and $n$ is the maximum bit size of the integers appearing in $L$.
\end{theorem}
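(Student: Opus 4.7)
The plan is to verify Theorem~\ref{finalcomplexity} by combining the correctness of each step of Algorithm~\ref{finalalgorithm} with a term-by-term summation of the per-step complexities already annotated in the algorithm. Most of the structural work has been done in the theoretical sections; the proof mostly amounts to organized bookkeeping.

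For correctness, I would track what the algorithm produces after each step. The preprocessing (steps~1--3) shifts $L_x$ by an integer $a$ so that its leading coefficient $l_x$ has nonzero constant term; by Proposition~\ref{prop:translate} this shift commutes with $\chi(A_p)$, so step~15 can undo it. Lemma~\ref{leadingcoeff} then guarantees that $\varphi(L_x)$ computed in step~4 has an integer leading coefficient $l_\theta$, hence the companion matrix $B(L_\theta)$ has entries in $\mathbb{Q}[\theta]$, and cleared of denominators by multiplication by $l_\theta$ it becomes the integer polynomial matrix $M(\theta)$ of step~7. Step~9 then computes, via Algorithm~\ref{matrix_factorial}, the factorial $M(\theta)\cdots M(\theta+p-1)\bmod(p,\theta^{d+1})$, which by the formula stated after the definition of $B_p(L_\theta)$ equals $l_\theta^p\,B_p(L_\theta)\bmod(p,\theta^{d+1})$; steps~10--12 rearrange the scalar factors so that the output is $\Xi_{\theta,\partial,p}(L_\theta)\bmod\theta^{d+1}$. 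Since $p>d$ on the primes kept in $\mathcal{P}$, Lemma~\ref{inverse_iso_comput} applies and \emph{reverse\_iso} (step~13) recovers $\varphi_p^{-1}(\Xi_{\theta,\partial,p}(L_\theta))$ exactly. The commutative diagram of Theorem~\ref{formofresult}, together with the division in step~14, gives the desired $P_p$ with $P_p(x^p,Y)=\chi(A_p(L_x^{\mathrm{shift}}))$, and step~15 translates back by $-a$. The finitely many primes excluded are those dividing $l_\theta$ together with those $\leq d$, both sets being independent of $N$.

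For the complexity, I would simply add up the costs listed in Algorithm~\ref{finalalgorithm}. Steps~1--4 and step~8 contribute a cost independent of $N$ or at most $\tilde{O}(N)$, negligible compared to what follows. The two dominant contributions are step~9, costing $\tilde{O}((m+d)^\omega(n+d)dN)$ by Proposition~\ref{proof_cost_tree} applied to the matrix $M(\theta)$ of size $m+d$, and step~11, costing $\tilde{O}(N(m+d)^{\Omega_1}d)$ to compute $|\mathcal{P}|=O(N/\log N)$ characteristic polynomials of $(m+d)\times(m+d)$ matrices over $\mathbb{F}_p[\theta]/(\theta^{d+1})$ via the algorithm of Kaltofen--Villard. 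The remaining steps~10, 12--15 each cost $\tilde{O}(N(m+d)^cd)$ for $c\leq 2$, dominated by step~9. Summing gives
\[
\tilde{O}\bigl(Nd\bigl((n+d)(m+d)^\omega+(m+d)^{\Omega_1}\bigr)\bigr),
\]
which is the announced bound.

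The part requiring the most care is the size estimate $m+d$ for the matrices involved from step~7 onward: because $\varphi(L_x)$ lives in $\mathbb{Z}[\theta]\langle\partial^{\pm 1}\rangle$ with powers of $\partial$ ranging from $-d$ to $m$, its companion matrix is not of order $m$ as one might naively expect but of order $m+d$. This affects the exponents $\omega$ and $\Omega_1$ in the complexity. A second subtle point is ensuring that truncation modulo $\theta^{d+1}$ is preserved through steps~9--12; this works because $\Xi_{\theta,\partial}(L_\theta)$ actually lies in $\mathbb{F}_p[\theta^p-\theta]$ by Theorem~\ref{formofresult}, so by Lemma~\ref{inverse_iso_comput} no information is lost. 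With these two observations in hand, the remainder is straightforward addition.
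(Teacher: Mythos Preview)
Your proposal is correct and follows exactly the paper's approach: sum the per-step costs annotated in Algorithm~\ref{finalalgorithm}, with correctness inherited from the earlier structural results. The one remark the paper's proof adds that you omit is that when step~(3) actually performs a shift, the integer coefficients of the translated operator grow to bit size $O(n+d\log d)$, so the ``$n$'' entering the subsequent cost estimates (notably Proposition~\ref{proof_cost_tree} in step~(9)) is a priori larger than the input $n$; the paper observes that this extra $d\log d$ is absorbed by the $\tilde{O}$, so the stated bound remains valid.
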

\begin{proof}
  This is easily seen by summing the cost of each step of
  Algorithm~\ref{finalalgorithm}.
  We observe that these complexities are correct
  whether or not~$0$ is a root of $L_x$. Indeed, when it is not, 
  the new operator obtained after the translation of step~(3) has
  integer coefficients of bit size $O(n+d\log(d))$, therefore our
  complexity analysis remains correct.
\end{proof}

As we have seen, Algorithm~\ref{finalalgorithm} does not compute the characteristic
polynomial of the $p$-curvature for every $p< N$, as we have to remove all primes dividing $l_x(0)$, where $l_x$ is the leading
coefficient of the operator (provided of course that $l_x(0)\neq 0$).
Primes less than the maximum degree of
the coefficients of the operator are also not included; however, it is
possible to remedy these with minor tweaks using Remark~\ref{remark_on_d}.
\begin{proposition}
  It is possible to compute all characteristic polynomials of the
  $p$-curvatures of an operator $L\in\mathbb{Z}[x]\langle\partial\rangle$
  of order~$m$ and maximum degree of the coefficients~$d$, for all primes~$p$
  less than~$N$, in asymptotically quasi-linear time in~$N$.
\end{proposition}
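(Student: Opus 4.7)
The plan is to reduce the proof to Theorem~\ref{finalcomplexity} by showing that the set of primes excluded by Algorithm~\ref{finalalgorithm} is finite and independent of~$N$, so that any reasonable method used to handle these exceptional primes adds only $O(1)$ bit operations (with the implicit constant depending on~$L$, but not on~$N$) to the overall cost.

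First I would identify precisely which primes are missed. Inspection of Algorithm~\ref{finalalgorithm} shows that the two sources of exclusion are: (i)~the primes $p\leq d$, discarded at step~(8) because the formula of Lemma~\ref{inverse_iso_comput} requires $p>d$; and (ii)~the primes dividing~$l_\theta$, the (constant) leading coefficient of~$\varphi(L_x)$ after the integer shift of step~(3). Both exclusions depend only on~$L$: the degree~$d$ is fixed by the input, and $l_\theta$ is a fixed nonzero integer, so it has only finitely many prime divisors, all bounded by~$|l_\theta|$. Write $S$ for the union of these two sets; then $|S|$ and $\max S$ are constants depending on~$L$ only.

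For primes $p\leq d$, I would invoke Remark~\ref{remark_on_d}: the relation
\[q_i'=\sum_{k=0}^{\lfloor i/(p-1)\rfloor}(-1)^{i-kp}\binom{i-k(p-1)}{k}q_{i-k(p-1)}\]
is lower triangular with unit diagonal, hence invertible, so $\varphi_p^{-1}$ can still be recovered from the matrix factorial modulo $\theta^{d+1}$ produced at step~(9) of Algorithm~\ref{finalalgorithm}. Running the main algorithm with this alternative inversion step removes exclusion~(i) at no asymptotic cost in~$N$. For the finitely many primes $p\mid l_\theta$ (exclusion~(ii)), I would simply invoke, prime by prime, the algorithm of~\cite{BoCaSc14}, which produces $\chi(A_p(L))$ in $\tilde{O}(\sqrt{p})$ bit operations—or even the naive Katz iteration for that matter. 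Since $p$ and the number of such primes are both bounded by constants depending only on~$L$, the total extra cost is a constant independent of~$N$.

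The only mild subtlety is to check that the modification for small primes does not break the global complexity bound: the $|S|$ values of~$p$ involved are bounded by a constant, and the triangular inversion from Remark~\ref{remark_on_d} costs $\tilde{O}((m+d)^2 d^2)$ bit operations per prime, again independent of~$N$. Combining Theorem~\ref{finalcomplexity} with these additional $O_L(1)$ contributions yields the full list of $\chi(A_p(L))$ for \emph{every} prime $p<N$ in
\[\tilde{O}\bigl(Nd((n+d)(m+d)^\omega+(m+d)^{\Omega_1})\bigr)\]
bit operations, which is quasi-linear in~$N$, as required. There is no serious obstacle here—the content of the proposition is essentially the observation that the exceptional primes form a set of bounded cardinality and bounded size, so they contribute nothing to the asymptotic behavior in~$N$.
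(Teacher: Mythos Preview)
Your proposal is correct and follows essentially the same approach as the paper: the excluded primes form a finite set depending only on~$L$, the primes dividing the (constant) leading coefficient are handled by invoking~\cite{BoCaSc14}, and the small primes $p\leq d$ are absorbed via the tweak of Remark~\ref{remark_on_d}. You are more explicit than the paper in separating the two cases and in tracking the per-prime costs, but the argument is the same.
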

\begin{proof}
  The computation for primes dividing $l_x(0)$ (with $l_x$
  being the leading coefficient of $L$) can be done using the main
  algorithm from \cite{BoCaSc14}.
  All other primes can be addressed using our new
  Algorithm~\ref{finalalgorithm}.\\
  As primes which cannot be computed using our algorithm only depend on the
  operator itself, the result immediately follows.
\end{proof}

%%%%%%%%%%%%%%%%%%%%%%%%%%%%%%%%%%%%%%%%%%%%%%%%%%%%%%%%%%%%%%%
\section{Implementation and timings}
We have implemented Algorithm~\ref{finalalgorithm} in the Computer Algebra software \emph{SageMath}. The source code can be downloaded from the following URL:
\href{https://github.com/raphaelpagesub/p\_curvatures}{\tt https://github.com/raphaelpagesub/p\_curvatures}.

As mentioned earlier, the computation of the characteristic polynomial of a
matrix of size~$m$ with coefficients in a ring can be performed in theory
using $\tilde{O}(m^{\Omega_1})$ ring operations, with $\Omega_1\simeq
2.697263$, see~\cite{KaVi04}.
However, we did not
implement the algorithm from~\cite{KaVi04}, and instead used an algorithm
computing a Hessenberg form of the matrix in $O(m^3)$
operations \cite{CaRoVa17}. Indeed, the latter algorithm is easier to
implement and the
computation of the characteristic polynomials is usually not the bottleneck
and does not hinder the quasi-linear nature of our algorithm. Furthermore,
experiments, as well as Theorem~\ref{finalcomplexity}, showed that most of
the running time is spent on the
computation of trees $T$ and $W$ when the order of the operator is of the
same magnitude as the degrees of its coefficients. We expect this trend to
improve when the ratio of these two factors grows in favor of the
order of the operator, but all experiments conducted so far showed that the
computation of the characteristic polynomials is never the bottleneck by a wide
margin.
It is still more than six times faster on an operator of order~$50$ with
coefficients of degree~$2$, for~$N=100$.
\begin{remark}
  In our experiments we do not consider cases where the degree $d$ of the
  coefficients is higher
  than the order $m$ of the operator because the complexity in $d$ is worse
  than in $m$. As in \cite[Section~IV]{BeBoVdH12}, the general case reduces
  to this one using the transformation~$x\mapsto -\partial$,
  $\partial\mapsto x$ which exchanges the roles of $\partial$ and $x$.
\end{remark}
\subsection{Timings on random operators}
\paragraph{Quasilinear as expected.}
Figure~\ref{figure1} shows computation timings of our implementation for
operators in $\mathbb{Z}[x]\langle\partial\rangle$ of varying sizes on \emph{SageMath} version 9.3.rc4 on an Intel(R)
Core(TM) i3-40050 machine at 1.7Ghz, running ArchLinux. As expected,
it does appear that our algorithm finishes in quasi-linear time in $N$.
We can also see a floor phenomenon, with computation time varying very little
between two powers of $2$, and then doubling. This is an expected effect of
the use of the complete binary tree structure in our algorithm.
This effect however seems less visible, even if it is still perceptible, as
the operator size increases. This is probably due to the fact that for
operators of small sizes, the cost of manipulating empty nodes is
non-negligible.

\begin{figure}
  \includegraphics[width=0.9\linewidth]{./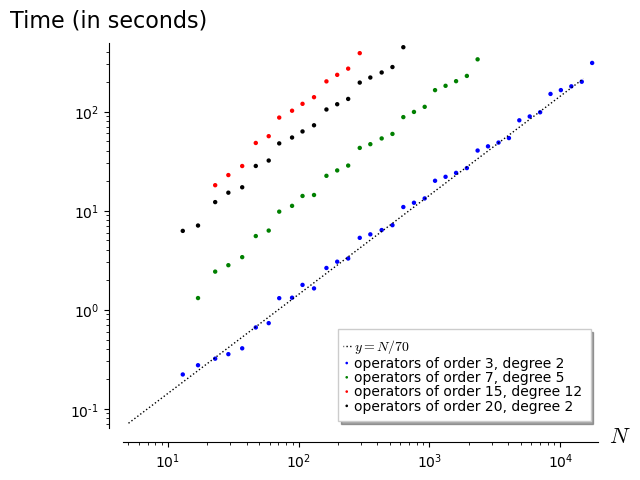}
  \caption{Computation time for random operators of varying orders and degrees}
  \label{figure1}
\end{figure}

\paragraph{Comparison with the previous algorithm.}

We have compared
the timings between our algorithm and the iteration of that of
\cite{BoCaSc14} for an operator of order 3 and degree 2.
Results are displayed on Figure~\ref{figure3} and show that the work
presented in this paper is indeed a
concrete progress for the considered task, compared to previous state of
the art:
experiments have shown that our algorithm was already more
than twice as fast (on the same machine) than the algorithm of \cite{BoCaSc14} \footnote{The implementation of the algorithm
from \cite{BoCaSc14} used can
be found at
\href{https://github.com/raphaelpagesub/p_curvatures/blob/main/p_curvature_single.sage}{\tt \scriptsize
https://github.com/raphaelpagesub/p\_curvatures/blob/main/p\_curvature\_single.sage}} for
$N\sim 10^4$. The right part
shows the ratio of computation times for operators of varying sizes.
Results tend to indicate that the good performances of our algorithm
compared to
the iteration of \cite{BoCaSc14} appear earlier when
the order of the operator grows. Further experiments should be
conducted to determine the influence of the degree of the coefficients.

\begin{figure*}
  \begin{tabular}{@{}c@{\hspace{4ex}}c@{}}
  \includegraphics[width=0.405\linewidth]{./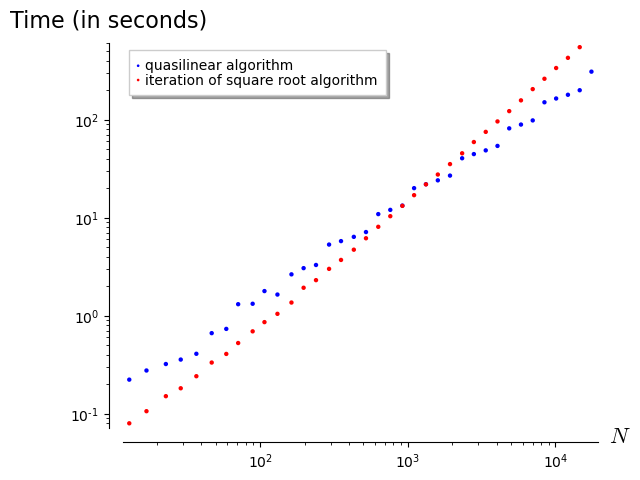} &
  \includegraphics[width=0.405\linewidth]{./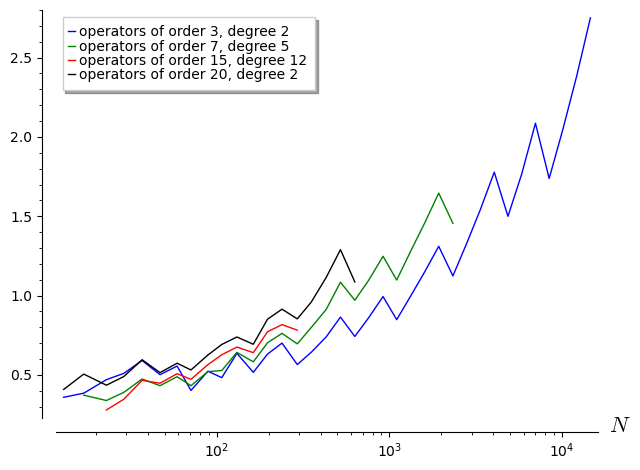} \\
  Computation time for operators of order 3 and degree 2 &
  Ratio of computation times for operators of varying sizes
  \end{tabular}
  \caption{Comparison between the iteration of
  \cite{BoCaSc14}'s algorithm and our algorithm}
  \label{figure3}
\end{figure*}

\subsection{Execution on special operators}
Our algorithm was also tested on various ``special'' operators. One example is
an
operator proven in~\cite{BoKa10} to annihilate the generating function 
$G(t;1,0)$ of Gessel walks in the quarter plane ending on the horizontal axis. 
% Because our program is currently made to run on operators
% with polynomials coefficients in one variable, we had to first evaluate it
% in $x=1$ (the choice of $1$ has no particular purpose).
The result of this test indicates that this operator has a nilpotent
$p$-curvature for all primes $p< 200$.  
This was of course expected since the generating function of Gessel walks is
algebraic~\cite{BoKa10}, hence the $p$-curvatures of its minimal-order
differential operator are all zero.
A similar test was performed on an operator proved in~\cite{BoKaVe20} 
to annihilate the generating function of Kreweras walks with interacting boundaries, which is not algebraic.   
Once again, the result of this test indicates that this operator has a nilpotent
$p$-curvature for all primes $p< 200$\footnote{The program running the above 
mentioned tests can be found at
\href{https://github.com/raphaelpagesub/p\_curvatures/blob/main/test\_p\_curvature.sage}
{\tt \scriptsize https://github.com/raphaelpagesub/p\_curvatures/blob/main/test\_p\_curvature.sage}}.
Further testing was conducted on all the 76 operators % of a specific class
for (specializations of) the D-finite generating functions for lattice walks 
classified in \cite{BCHKP17} with $p< 200$, with yet again similar
results\footnote{The precise list of
operators we considered can be found at
\href{https://specfun.inria.fr/chyzak/ssw/ct-P.mpl}
{\tt \scriptsize https://specfun.inria.fr/chyzak/ssw/ct-P.mpl}
and the testing file can be found at
\href{https://github.com/raphaelpagesub/p\_curvatures/blob/main/ct-P.sage}
{\tt \scriptsize https://github.com/raphaelpagesub/p\_curvatures/blob/main/ct-P.sage}}.
All those results were already
predicted by Chudnovsky's theorem and make us quite confident in the
accuracy of our implementation.\par

\section{Conclusion and Future Work}\label{sec:conclusion}
%%%%%%%%%%%%%%%%%%%%%%%%%%%%%%%%%%%%%%%%%%%%%%%%%%%%%%%%%%%%%%%%%%%%
We have proposed an algorithm which computes the characteristic polynomials
of the $p$-curvatures of a differential operator with coefficients in
$\mathbb{Z}[x]$ % (and so in $\mathbb{Q}(x)$) 
for almost all primes $p< N$,
in quasi-linear time in $N$.\par
% \paragraph{Future works}\label{sec:future_works}
We expect that the principle of this algorithm can theoretically be applied 
for differential operators with polynomial coefficients in any ring $A$ by replacing
$\nicefrac{\mathbb{Z}}{p\mathbb{Z}}$ by $\nicefrac{A}{pA}$.
Especially we expect that this algorithm extends nicely to
operators with polynomial coefficients in the integer ring of a number
field or with multivariate polynomial coefficients (which will allow us to
deal with
operators with parameters). In the latter case, we expect its
time complexity in $N$ to be in $\tilde{O}(N^s)$ where~$s$ is the
number of variables.
Furthermore, \cite{BoCaSc16} brought back the computation of the similarity
class of the $p$-curvature of an operator in $K[x]\langle\partial\rangle$,
with $K$ a field of positive characteristic, to that of a matrix factorial.
Thus we hope that the same principle can be applied to design an algorithm
for computing the similarity classes of the $p$-curvatures of an operator in
$\mathbb{Z}[x]\langle\partial\rangle$, for almost all primes $p< N$, in
quasi-linear time in $N$.\par
This algorithm may also have applications to future works on factorisation
of differential operators, as in \cite{Clu03}.
%%%%%%%%%%%%%%%%%%%%%%%%%%%%%%%%%%%%%%%%%%%%%%%%%%%%%%%%%%%%%%%%%%%%

\bigskip

\def\gathen#1{{#1}}\def\cprime{$'$}
  \def\gathen#1{{#1}}\def\haesler#1{{#1}}\def\hoeij#1{{#1}}
\bibliographystyle{alpha}
\bibliography{./bibliography_issac}
\end{document}